\newcommand{\bind}{>\!\!>\!\!=}
\newcommand{\dial}[3]{\left| #1 \right|^{#2}_{#3}}
\newcommand{\LL}{\operatorname{\mathbf{LL}}}
\newcommand{\MLL}{\operatorname{\mathbf{MLL}}}
\newcommand{\MELL}{\operatorname{\mathbf{MELL}}}
\newcommand{\Mod}{\operatorname{\mathbf{-Mod}}}
\newcommand{\aeq}{\rotatebox[origin = c]{180}{{\sf \AE}}}
\newcommand{\sembrack}[1]{\llbracket #1 \rrbracket}
\newcommand{\id}{\operatorname{id}}
\newcommand{\cg}{\mathcal G}
\newcommand{\ch}{\mathcal H}
\newcommand{\R}{\mathcal R}
\newcommand{\fd}{\mathfrak D}
\newcommand{\HO}{\mathbf{HO}}
\newcommand{\ignore}[1]{}
\newtheorem{lemma}{Lemma}
\newtheorem{theorem}{Theorem}
\newtheorem{definition}{Definition}
\newtheorem{corollary}{Corollary}
\title{Dialectica models of additive-free linear logic}
\author{Jules Hedges \\
Queen Mary University of London \\
\texttt{j.hedges@qmul.ac.uk}}
\date{}
\begin{document}

\maketitle

\begin{abstract}
This paper presents a construction which transforms categorical models of additive-free propositional linear logic, closely based on de Paiva's dialectica categories and Oliva's functional interpretations of classical linear logic. The construction is defined using dependent type theory, which proves to be a useful tool for reasoning about dialectica categories. Abstractly, we have a closure operator on the class of models: it preserves soundness and completeness and has a monad-like structure. When applied to categories of games we obtain `games with bidding', which are hybrids of dialectica and game models, and we prove completeness theorems for two specific such models.
\end{abstract}

%===========================================================
\section{Introduction}
%===========================================================

This paper presents a construction which transforms categorical models of additive-free propositional linear logic, closely based on de Paiva's dialectica categories and Oliva's functional interpretations of classical linear logic.

The dialectica categories \cite{depaiva91a} are a family of models of intuitionistic logic, and classical and intuitionistic linear logic, based on G\"odel's dialectica interpretation. Historically they were the first models of linear logic to not equate multiplicative and additive units, and they have been generalised in several ways, for example \cite{hyland02} defines dialectica categories starting only from a partially ordered fibration. The construction in this paper is closely related to \cite{depaiva91b} and \cite{depaiva07}; the similarities and differences between that construction and the original dialectica categories is discussed in those papers. While most of the literature on dialectica categories aims to construct large classes of structured categories and then characterise those which are sound models of some logic, the aim of this paper is rather different: to construct a small number of concrete models which can be interpreted as game models and are amenable to a proof-theoretic analysis of the valid formulas, and in particular are as close as possible to being complete models of linear logic.

Based on de Paiva's models, \cite{shirahata06} gave a syntactic dialectica and Diller-Nahm interpretation to first order affine logic, and \cite{oliva07b} to classical linear logic. The semantics of the Diller-Nahm variant is explored in detail in chapter 4 of \cite{depaiva91a}, and will be used in this paper. A completeness theorem is given in \cite{oliva08} for the dialectica interpretation, based on G\"odel's original completeness theorem for Heyting arithmetic \cite{avigad98}, which has not been exploited so far in the semantic literature. This relies on a small but crucial modification to de Paiva's interpretation of the linear exponentials. The Diller-Nahm interpretation of linear logic appears in \cite{oliva07b} and \cite{oliva10}, although no completeness proof for the Diller-Nahm interpretation of linear logic appears in the literature, to the author's knowledge.

The dialectica interpretation, intuitively, is a proof translation which takes a formula $\varphi$ to a quantifier-free formula $\dial{\varphi}{x}{y}$ in which the variables $x$ and $y$ appear free. The variable $x$ represents `witnesses', or evidence that a theorem is true, and $y$ represents `counter-witnesses', or evidence that a theorem is false. The validity of a theorem is then reduced to the existence of a witness which defeats every counter-witness, that is, $\exists x \forall y . \dial{\varphi}{x}{y}$. However even if $\varphi$ is a first-order formula the variables $x$ and $y$ may have higher types. The original purpose was to prove the relative consistency of Heyting arithmetic to the quantifier-free language called system T, however the dialectica interpretation is now mainly used to give a computational interpretation to theorems of classical analysis, see \cite{kohlenbach08}. 

The semantic equivalent to the dialectica interpretation, at least from the point of view of this paper, is to replace the formula $\dial{\varphi}{x}{y}$ with a double-indexed family of objects in some model $\R$. We can imagine that we are composing the syntactic proof translation with a semantic interpretation of formulas. The fact that the dialectica interpretations of linear negation and multiplicative conjunction are given recursively by
\[ \dial{\varphi^\bot}{y}{x} = \left( \dial{\varphi}{x}{y} \right)^\bot \]
\[ \dial{\varphi \otimes \psi}{x,u}{f,g} = \dial{\varphi}{x}{fu} \otimes \dial{\psi}{u}{gx} \]
(in particular, that the same connectives occur on the right hand side) tells us that $\R$ must have a sound interpretation of these connectives. This leads us to the construction in \cite{depaiva91b}, which builds a dialectica category from a posetal model of multiplicative linear logic, or \emph{lineale} \cite{depaiva02}. The dialectica interpretation eliminates additives (in the sense that additives do not appear on the right hand side of the corresponding formulas), and it is also possible to eliminate exponentials in a sound way by defining
\[ \dial{\oc \varphi}{x}{f} = \dial{\varphi}{x}{fx} \]
This is the interpretation of exponentials used in all of the literature on dialectica categories, and also in \cite{shirahata06}. However the completeness theorem of \cite{oliva08} relies on changing this definition to
\[ \dial{\oc \varphi}{x}{f} = \oc \dial{\varphi}{x}{fx} \]
To interpret this semantically $\R$ must also have a sound interpretation of the exponential, which leads to our construction of dialectica categories beginning from an arbitrary model of multiplicative-exponential linear logic ($\MELL$). Thus this work can be seen as the result of a `dialogue' between syntax and semantics.

Overall, we have a construction $\fd$ which takes a model of $\MELL$ to a model of $\LL$. The first of two aims of this paper is to explore the abstract properties of $\fd$. We prove in section \ref{functor} that $\fd$ is functorial, and in section \ref{monad} it has a monad-like structure on a particular category of models of $\MLL$, although one of the monad laws fails and even the weaker result fails to extend to $\MELL$. This is closely related to the main theorem in \cite{hofstra11}. (We could also explore the 2-categorical properties of $\fd$, but that is left for later work.) We also prove that $\fd$ preserves soundness (section \ref{sound}) and completeness (section \ref{complete}) for $\MELL$, so we can justify calling it a `closure operator' on models. 

The second aim of this paper is to construct specific dialectica categories which have logical completeness properties. This requires that the underlying model also has completeness properties, which in practice means constructing a dialectica category from a category of games. In section \ref{games} we informally describe such a dialectica category as a category of `games with bidding', greatly extending the comments in \cite{blass91} on viewing dialectica categories as game models. In particular in section \ref{complete} we consider `Hyland-Ong games with bidding' based on \cite{hyland93}, and `asynchronous games with bidding' based on \cite{mellies04}, and prove that these models are complete respectively for $\MLL$ and $\MELL$.

The model of asynchronous games with bidding, in particular, is an extremely interesting model because the starting model has the strongest possible completeness theorem, namely it is fully complete for $\MELL$. An analysis of the formulas containing additives which are valid in this model will be carried out in a follow-up paper, but an overview of the argument is given in section \ref{additives}. Also in this section we give a simple counterexample proving that there is no dialectica category which is both sound and complete for full propositional $\LL$. 

There are two main technical ideas in this paper which contribute to our two aims. The first is that we replace the posets of \cite{depaiva91b} and \cite{hyland01} with categories, and use dependent type theory in defining and reasoning about our models. If our metatheory has choice this formally gains nothing, however in practice dependent type theory proves to be a powerful tool. This will be justified in particular in sections \ref{complete} and \ref{monad}, which would be hard to formalise without dependent type theory. It also suggests the implementation of this construction (and the formalisation of the proofs in this paper) in a dependently typed programming language. This would require libraries for 2-category theory and monoidal category theory, and would be an interesting way to embed linear reasoning into a proof assistant.

The second idea is that we work with the linear-nonlinear semantics of $\MELL$ and $\LL$ given in \cite{benton94}. This allows the relationship between the linear and intuitionistic dialectica categories to be clearly seen, and allows us to factor the exponential into four parts. This also suggests turning back around to syntax and studying a syntactic dialectica interpretation of linear-nonlinear logic.

Note that in this paper we are only considering \emph{classical} linear logic. The differences between dialectica models of classical and intuitionistic linear logic are subtle: firstly for intuitionistic linear logic the sets of witnesses and counterexamples must \emph{both} be nonempty, whereas for classical linear logic one may be empty; and secondly for intuitionistic linear logic we consider the bids in games with bidding to be sequential rather than simultaneous. Since the two logics coincide in the absence of additives, the difference will not often affect us.

%===========================================================
\section{The dialectica transformations of a category}
%===========================================================

In this section we will define the two dialectica transformations of a category, and relate them to the existing literature on dialectica categories. The game-semantic intuition corresponding to these definitions will be given in section \ref{games}.

Let $R$ be an arbitrary category. We will define a category $\fd_l (R)$ called the \emph{linear dialectica transformation} of $R$. The objects of $\fd_l (R)$ are double-indexed families $\cg^X_Y$ where $X$ and $Y$ are arbitrary sets not both empty, and each $\cg^x_y$ is an object of $R$. Throughout this paper we will specify such objects using the notation
\[ \cg^X_Y : \binom{x}{y} \mapsto \cdots \]
where the right hand side is an expression in terms of $x$ and $y$. Since $X$ and $Y$ will often be (dependent) pairs we will drop the parentheses, as is done in the proof theory literature. Sometimes we will decorate witness and counter-witness variables with their individual types for clarity, as in
\[ \cg^{X \times U}_{Y \times V} : \binom{x : X, u : U}{y : Y, v : V} \mapsto \cdots \]

A morphism from $\cg^X_Y$ to $\ch^U_V$ is an element of a dependent type in the category of sets:
\[ \hom_{\fd_l (R)} \left( \cg^X_Y, \ch^U_V \right) = \sum_{\substack{f : X \to U \\ g : V \to Y}} \prod_{\substack{x : X \\ v : V}} \hom_R \left( \cg^x_{gv}, \ch^{fx}_v \right) \]
Hence a morphism is a triple $(f, g, \alpha)$ where $f : X \to U$, $g : V \to Y$ and $\alpha$ is a double-indexed family of $R$-morphisms
\[ \alpha_{x,v} : \cg^x_{gv} \to \ch^{fx}_v \]
The proof-theoretic reading of this is that a morphism consists of a witness, together with a mapping that takes each counter-witness to a proof that the counter-witness is invalid. This is simply the type-theoretic interpretation of the usual dialectica interpretation of linear implication, with quantifiers replaced by dependent types.

For simplicity, in this paper we only explicitly use the set-theoretic interpretation of dependent type theory, however it should be straightforward to generalise to any model of dependent type theory. This would require $R$ to be enriched over a locally cartesian closed category $\mathcal C$, and that we have a suitable fibration of objects of $R$ over $\mathcal C$ to replace set-indexed families, similar to \cite{hyland02} (this idea was suggested in \cite{hyland07}). 

In $\fd_l (R)$ the identity morphism on $\cg^X_Y$ is given by the identity functions on $X$ and $Y$ together with identity morphisms in $R$. The composition of a morphism $\cg^X_Y \multimap \ch^U_V$ given by $(f,g,\alpha)$ and another $\ch^U_V \multimap \mathcal I^P_Q$ given by $(f',g',\beta)$ is given by $f' \circ f : X \to P$ and $g \circ g' : Q \to Y$, together with the composition
\[ (\beta \circ \alpha)_{x,q} = \beta_{fx,v} \circ \alpha_{x,g'q} : \hom_R \left( \cg^x_{g (g'q)}, \mathcal I^{f' (fx)}_q \right) \]

\begin{lemma}
Let $R$ be any category, then $\fd_l (R)$ is a category with finite products and coproducts.
\end{lemma}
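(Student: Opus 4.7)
The category axioms for $\fd_l (R)$ follow by unpacking: associativity and the unit laws for composition reduce to the corresponding laws in $\mathbf{Set}$ (for the witness and counter-witness maps) and in $R$ (for the family $\alpha$), all of which are immediate. The substantive content of the lemma is the construction of the additive structure.

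For the binary product of $\cg^X_Y$ and $\ch^U_V$ I would take the object with witness set $X \times U$, counter-witness set $Y + V$, and underlying $R$-family defined by case analysis on the counter-witness: the value at a left-injected $y$ is $\cg^x_y$, and at a right-injected $v$ is $\ch^u_v$. The first projection takes the set-theoretic projection $X \times U \to X$ on witnesses, the left injection $Y \hookrightarrow Y + V$ on counter-witnesses, and identities in $R$ on the $\alpha$ component, since with these choices the indexing matches on the nose. Pairing two morphisms $(f_i, g_i, \alpha_i) : \mathcal F^W_Z \to \cg, \ch$ uses $\langle f_1, f_2 \rangle$ on witnesses, $[g_1, g_2] : Y + V \to Z$ on counter-witnesses, and case analysis on $Y+V$ to select either $\alpha_1$ or $\alpha_2$ on the $R$-family; uniqueness follows from the universal properties of $\times$ and $+$ in $\mathbf{Set}$ together with the fact that case analysis inverts the coproduct injections. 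Coproducts are defined dually, with witness set $X + U$, counter-witness set $Y \times V$, and case analysis performed on the witness; the verification is symmetric. The terminal and initial objects are $T^1_\emptyset$ and $I^\emptyset_1$ respectively; no underlying $R$-data is needed for either, since one of the two index types is empty and so the dependent family $\alpha$ is trivially unique.

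The one step requiring attention is the convention that $X$ and $Y$ are not both empty. This is preserved by all four constructions: for the binary product, if $Y + V = \emptyset$ then $Y = V = \emptyset$, forcing $X$ and $U$ nonempty by hypothesis and hence $X \times U$ nonempty; the argument for coproducts is dual; and the nullary cases manifestly have a nonempty component. Beyond that, the whole proof is essentially a translation of the standard dialectica additives into the paper's dependent-type notation, and I do not anticipate any serious obstacle.
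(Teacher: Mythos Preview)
Your proposal is correct. The paper itself does not give an argument: its entire proof is the single line ``By proposition 3.7 of \cite{hyland01}.'' Your explicit constructions of $\with$, $\oplus$, $\top$, $0$ coincide exactly with the additive structure the paper records later in Figure~\ref{connectives}, so you are simply spelling out what the cited reference contains. The only additional content you provide beyond the standard construction is the check that the constraint ``$X$ and $Y$ not both empty'' is preserved, which is correct and worth including since the paper's objects carry this side condition.
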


\begin{proof}
By proposition 3.7 of \cite{hyland01}.
\end{proof}

Using the axiom of choice (at least in the case $\mathcal C = \mathbf{Set}$), this definition is equivalent to $M_N (\mathcal C)$ in \cite{depaiva91b} where $N$ is the posetal reflection of $R$ (assuming a Grothendeick universe, since $R$ will be large in general). To be clear, this definition is not intended to be exactly equivalent to the original dialectica categories in \cite{depaiva91a}, which is more elegant and far more general but is hard to use for concrete calculations. In particular using type theory gives us explicit names for all of our morphisms, and this will make our life easier especially in sections \ref{complete} and \ref{monad}. Moreover we can avoid using the axiom of choice in our metatheory, and so the contents of this paper could be directly implemented in a dependently typed programming language.

Next we will construct the Diller-Nahm translation $\fd_i (S)$ of an arbitrary category $S$ with finite products. This construction is mostly closely related to that in \cite{hyland02}, although we consider it in far less generality than in that paper. The objects of $\fd_i (S)$, as before, are double-indexed families $\cg^X_Y$ where $X$ and $Y$ are sets not both empty and each $\cg^x_y$ is an element of $S$. The hom-sets are defined by
\[ \hom_{\fd_i (S)} \left( \cg^X_Y, \ch^U_V \right) = \sum_{\substack{f : X \to U \\ g : X \times V \to Y^*}} \prod_{\substack{x : X \\ v : V}} \hom_S \left( \prod_{y \in g(x,v)} \cg^x_y, \ch^{fx}_v \right) \]
Here $Y^*$ is the set of finite multisets with elements in $Y$. This definition is the type-theoretic interpretation of the Diller-Nahm interpretation of intuitionistic implication
\[ \exists f^{X \to U}, g^{X \times V \to Y^*} \forall x^X, v^V . \left( \forall y \in g(x,v) . \dial{\varphi}{x}{y} \right) \to \dial{\psi}{fx}{v} \]
However we carefully distinguish `internal' and `external' quantifiers: the internal $\forall$ is interpreted as the categorical product in the underlying model, and the external $\exists \forall$ is interpreted as dependent types in $\mathcal C$.

In $\fd_i (S)$ the structure is very similar. If we have a morphism given by $f : X \to U$ and $g : X \times V \to Y^*$ and another given by $f' : U \to P$ and $g' : U \times Q \to V^*$ the composition is given by $f' \circ f : X \to P$ and
\[ \lambda x^X, q^V . g' (fx, q) \bind \lambda v^V . g(x,v) : X \times Q \to Y^* \]
together with composition in $S$. Here $\bind$ is the bind operator of the finite multiset  monad, where $l \bind f$ applies $f$ to each element of $l$, each giving a multiset, and collects the results with a union.

\begin{lemma}
Let $S$ be any category with finite products, then $\fd_i (S)$ is a category with finite products.
\end{lemma}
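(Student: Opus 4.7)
The plan is to verify that $\fd_i(S)$ is a category first, and then to exhibit explicit constructions for the terminal object and binary products together with their universal properties. The structure of the argument mirrors the corresponding result for $\fd_l(R)$ cited from \cite{hyland01}, but the presence of the finite multiset monad in the counter-witness component forces us to use its algebraic laws rather than pure function composition.

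For the category axioms, I would first check that the identity described informally in the paper really is an identity. On $\cg^X_Y$, take $f = \id_X$, $g : X \times Y \to Y^*$ given by $g(x,y) = \{y\}$ (the singleton multiset), and $\alpha_{x,y}$ the canonical isomorphism $\prod_{y' \in \{y\}} \cg^x_{y'} \cong \cg^x_y$ followed by the identity in $S$. The left and right unit laws then reduce exactly to the two unit laws of the finite multiset monad: $\{y\} \bind h = h(y)$ on one side, and $l \bind (\lambda y. \{y\}) = l$ on the other. Associativity of composition similarly reduces to associativity of $\bind$, namely $(l \bind h) \bind k = l \bind (\lambda y. h(y) \bind k)$, combined with associativity of composition in $S$ applied to the $\alpha$-components (after reindexing the internal products along the obvious bijection between $\bigsqcup_{v \in g'(fx,q)} g(x,v)$ and the elements of the resulting multiset).

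For finite products, the terminal object is $\top^1_\emptyset$ equipped with any choice of object of $S$ in its single slot (this slot is never referenced). A morphism into $\top$ from $\cg^X_Y$ consists of the unique $f : X \to 1$, the unique $g : X \times \emptyset \to Y^*$, and a vacuous family $\alpha$, so uniqueness is automatic. For the binary product of $\cg^X_Y$ and $\ch^U_V$ I would set $(\cg \& \ch)^{X \times U}_{Y + V}$ with $(\cg \& \ch)^{(x,u)}_{\iota_1 y} = \cg^x_y$ and $(\cg \& \ch)^{(x,u)}_{\iota_2 v} = \ch^u_v$. The projection onto $\cg$ uses the first projection $X \times U \to X$ on witnesses, $((x,u), y) \mapsto \{\iota_1 y\}$ on counter-witnesses, and the canonical isomorphism on the $\alpha$-component; the projection onto $\ch$ is symmetric. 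Pairing a pair of morphisms $(f_1, g_1, \alpha_1) : \mathcal{I}^P_Q \to \cg^X_Y$ and $(f_2, g_2, \alpha_2) : \mathcal{I}^P_Q \to \ch^U_V$ yields $\langle f_1, f_2 \rangle$ on witnesses, and on counter-witnesses the case analysis $g_1(p, y)$ or $g_2(p, v)$ (reinterpreted as multisets in $Q$) depending on the summand of $Y + V$; the $\alpha$-component is obtained coordinatewise.

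The main obstacle will be the bookkeeping around the multiset bind in verifying associativity and the universal property of the product. In particular, showing that pairing followed by projection recovers the original morphism requires the unit law $l \bind (\lambda y. \{y\}) = l$ to collapse the post-composed singleton lookups, and showing uniqueness of pairing relies on the injectivity of $\iota_1, \iota_2$ to show that any candidate $g : P \times (Y + V) \to Q^*$ must split into a pair determined by its projections. Beyond these monad-theoretic details, the proof is essentially diagrammatic, and I expect it could be formalised without much pain in the dependent type theory that the paper already uses.
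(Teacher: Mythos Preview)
Your argument is correct and essentially complete: the identity and composition verifications do reduce to the unit and associativity laws of the finite-multiset monad exactly as you describe, and your product construction (witnesses in $X \times U$, counter-witnesses in $Y + V$, fibre given by case analysis on the summand) is the one that works. The paper's own proof, by contrast, is a one-line citation to section~3 of \cite{hyland02}; so you are simply supplying by hand the direct construction that the paper defers to the literature, and what you buy is self-containment at the cost of the monad-theoretic bookkeeping you rightly flag.

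Two small remarks. First, your terminal object $\top^{\{*\}}_\varnothing$ has no slots at all, since the index set $\{*\} \times \varnothing$ is empty; there is nothing to ``equip'' it with, though this is harmless since, as you note, the data is never referenced. Second, in checking that projection after pairing recovers the original morphism, the law you actually invoke is the \emph{left} unit $\{a\} \bind h = h(a)$ (the projection contributes the singleton, which is then fed to the pairing's case-split), not the right unit $l \bind (\lambda y.\{y\}) = l$ that you quote; and the uniqueness argument relies on $\iota_1,\iota_2$ jointly \emph{covering} $Y+V$ rather than on their injectivity. Neither slip affects the validity of the outline.
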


\begin{proof}
By section 3 of \cite{hyland02}.
\end{proof}

%===========================================================
\section{The dialectica transformation of a linear-nonlinear adjunction}\label{sound}
%===========================================================

We begin with a general definition of a model of $\MELL$ and a model of $\LL$. A model of multiplicative linear logic ($\MLL$) is given by a $*$-autonomous category $R$ \cite{barr91}, that is, a symmetric monoidal closed category $(R, \otimes, \multimap, 1)$ with a functor $^\bot : R \to R$ and natural isomorphisms $^\bot \circ ^\bot \cong \operatorname{id}_R$ and
\[ \hom_R (X \otimes Y, Z^\bot) \cong \hom_R (X, (Y \otimes Z)^\bot) \]

For the interpretation of exponentials we use the linear-nonlinear semantics of \cite{benton94}, which is surveyed in detail in \cite{mellies09}. A categorical model of $\MELL$ is given by a $*$-autonomous category $R$ together with another category $S$ with finite products and an adjunction
\begin{center}
\begin{tikzpicture}[node distance=2cm]
\node (A) {$S$};
\node (B) [right of=A] {$\bot$};
\node (C) [right of=B] {$R$};
\draw [->] (A) edge [bend left=45] node [below] {$L$} (C);
\draw [->] (C) edge [bend left=45] node [above] {$M$} (A);
\end{tikzpicture}
\end{center}
or, more briefly,
\[ L \dashv M : R \to S \]
Here $L$ (called \emph{linearisation}) and $M$ (called \emph{multiplication}) are lax symmetric monoidal functors, that is, there are natural transformations
\begin{align*}
M(X) \times M(Y) &\to M(X \otimes Y) &\top &\to M(1) \\
L(X) \otimes L(Y) &\to L(X \times Y) &1 &\to L (\top)
\end{align*}
and the unit and counit of the adjunction must also respect the monoidal and cartesian monoidal structures (ie. the adjunction must be a \emph{symmetric monoidal adjunction}). Such a setup is called a \emph{linear-nonlinear adjunction}. Given this adjunction, the denotation of the exponential $!$ is the composition $L \circ M$, which is a comonad on $R$ (and conversely, if we have a model in which $\oc$ is given explicitly we can recover $S$, $M$ and $L$ from the co-Kleisli adjunction). The entire model, which contains a pair of categories and functors and various natural transformations, will be denoted $\R$. For a model of $\LL$ we simply require that $R$ also has finite products.

Given such a model of $\MELL$,  the dialectica transformation of this model will be a new pair of categories and a linear-nonlinear adjunction
\ignore{\[ \fd_m (L) \dashv \fd_f (M) : \fd_l (R) \to \fd_i (S) \]}
\begin{center}
\begin{tikzpicture}[node distance=2cm]
\node (A) {$\fd_i (S)$};
\node (B) [right of=A] {$\bot$};
\node (C) [right of=B] {$\fd_l (R)$};
\draw [->] (A) edge [bend left=45] node [below] {$\fd_{dn} (L)$} (C);
\draw [->] (C) edge [bend left=45] node [above] {$\fd_f (M)$} (A);
\end{tikzpicture}
\end{center}
The categories $\fd_l (R)$ and $\fd_i (S)$ are precisely the categories defined in the previous section. The transformations of the functors $M$ and $L$ will be given below. The transformed model as a whole will be denoted $\fd (\mathcal R)$.

The interpretations of each connective in $\fd_l (R)$ is given in figure \ref{connectives}.
\begin{figure}
\center{\textbf{Multiplicatives}}
\begin{align*}
1^{\{ * \}}_{\{ * \}} &: \binom{*}{*} \mapsto 1 \\
\\
\bot^{\{ * \}}_{\{ * \}} &: \binom{*}{*} \mapsto \bot \\
\\
(\cg^X_Y)^\bot = (\cg^\bot)^Y_X &: \binom{y}{x} \mapsto (\cg^x_y)^\bot \\
\\
\cg^X_Y \otimes \ch^U_V = (\cg \otimes \ch)^{X \times U}_{(U \to Y) \times (X \to V)} &: \binom{x,u}{f,g} \mapsto \cg^x_{fu} \otimes \ch^u_{gx} \\
\\
\cg^X_Y \parr \ch^U_V = (\cg \parr \ch)^{(V \to X) \times (Y \to U)}_{Y \times V} &: \binom{f,g}{y,v} \mapsto \cg^{fv}_y \parr \ch^{gy}_v \\
\end{align*}

\center{\textbf{Additives}}
\begin{align*}
\top^{\{ * \}}_\varnothing & && \\
\\ 
0^\varnothing_{\{ * \}} \\
\\
\cg^X_Y \with \ch^U_V = (\cg \with \ch)^{X \times U}_{Y + V} &: \binom{x,u}{z} \mapsto \begin{dcases*}
\cg^x_z & if $z \in Y$ \\
\ch^u_z & if $z \in V$ \\
\end{dcases*} \\
\\
\cg^X_Y \oplus \ch^U_V = (\cg \oplus \ch)^{X + U}_{Y \times V} &: \binom{z}{y,v} \mapsto \begin{dcases*}
\cg^z_y & if $z \in X$ \\
\ch^z_v & if $z \in U$ \\
\end{dcases*} \\
\end{align*}

\center{\textbf{Exponentials}}
\begin{align*}
\oc \cg^X_Y = (\oc \cg)^X_{X \to Y^*} &: \binom{x}{f} \mapsto \bigotimes_{y \in fx} \oc \cg^x_y \\
\\
\wn \cg^X_Y = (\wn \cg)^{Y \to X^*}_Y &: \binom{g}{y} \mapsto \bigparr_{x \in gy} \wn \cg^x_y \\
\end{align*}
\caption{Interpretation of constants and connectives in $\fd_l (R)$}
\label{connectives}
\end{figure}

\begin{lemma}
Let $R$ be any $*$-autonomous category, then $\fd_l (R)$ is a $*$-autonomous category.
\end{lemma}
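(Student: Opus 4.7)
The plan is to verify each piece of the $*$-autonomous structure in turn, exploiting the general principle that every connective in Figure~\ref{connectives} is built fiberwise from the corresponding operation in $R$ together with parameter manipulation via the cartesian closed structure of $\mathbf{Set}$. Accordingly, every coherence datum can be assembled from a coherence datum of $R$ in the fiber, together with an appropriate isomorphism of iterated function spaces in $\mathbf{Set}$.

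First I would check that $(-)^\bot$ is an involution: unfolding twice gives $(\cg^X_Y)^{\bot\bot}$ with witness set $X$, counter-witness set $Y$ and fiber $(\cg^x_y)^{\bot\bot}$, and the iso $\cg \cong \cg^{\bot\bot}$ in $\fd_l(R)$ is the triple $(\id_X, \id_Y, \eta)$ with $\eta_{x,y}$ the double-negation iso of $R$. Next I would establish that $(\fd_l(R), \otimes, 1)$ is symmetric monoidal. The witness components of the associator, unitors and braiding come from the cartesian structure of $\mathbf{Set}$; the counter-witness components come from currying, uncurrying and swapping of arguments, e.g.\ for associativity one has to match $(P \to (U \to Y) \times (X \to V)) \times ((X \times U) \to Q)$ with $((U \times P) \to Y) \times (X \to (P \to V) \times (U \to Q))$, which is a standard iso in $\mathbf{Set}$; the fiber components use the associator of $R$. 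The coherence pentagons and hexagons then reduce to those of $R$ fiberwise together with those of $\mathbf{Set}$.

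Next I would define the internal hom $\cg \multimap \ch := (\cg \otimes \ch^\bot)^\bot$ and unfold: this has witness set $(V \to Y) \times (X \to U)$, counter-witness set $X \times V$, and fiber $(\cg^x_{fv} \otimes (\ch^{gx}_v)^\bot)^\bot$, which is naturally isomorphic to $\cg^x_{fv} \multimap_R \ch^{gx}_v$ using the closed structure of $R$. Closedness of $\fd_l(R)$ then amounts to exhibiting a natural bijection $\hom(\cg \otimes \ch, \mathcal I) \cong \hom(\cg, \ch \multimap \mathcal I)$. Writing both sides as dependent sums following the definition of the hom-sets of $\fd_l(R)$, one shuffles parameters by $\mathbf{Set}$-currying to move from $X \times U \to P$ to $X \to (U \to P)$ (and dually on the counter side), and applies the closure iso of $R$ in each fiber.

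Finally, the $*$-autonomous isomorphism $\hom(X \otimes Y, Z^\bot) \cong \hom(X, (Y \otimes Z)^\bot)$ is immediate from closedness together with $(Y \otimes Z)^\bot \cong Y \multimap Z^\bot$, which falls out of the definition of $\multimap$ and the involution. The main obstacle throughout is pure bookkeeping: each coherence axiom unfolds to a cube of bijections between iterated dependent function spaces, and tracking the argument reorderings by hand is painful. This is precisely where the dependent-type notation introduced earlier in the paper pays off, since once the witness/counter-witness/fiber skeleton is written down, each verification reduces to a mechanical check.
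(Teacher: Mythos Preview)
Your sketch is correct: the $*$-autonomous structure of $\fd_l(R)$ really is assembled fiberwise from that of $R$ together with currying/uncurrying in $\mathbf{Set}$, exactly as you describe, and each coherence axiom does reduce to a coherence diagram in $R$ paired with an equality of reindexing functions. The one small slip is notational rather than mathematical: in your description of $\cg \multimap \ch$ you mean $(V \to Y)\times(X \to U)$ as the witness set for the internal hom (not for the tensor), and the fiber should indeed read $\cg^x_{fv} \multimap_R \ch^{gx}_v$ after using $(A \otimes B^\bot)^\bot \cong A \multimap B$ in $R$; this is what you wrote, so no correction is needed beyond care with the index conventions when you expand the adjunction bijection.

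The paper, however, does not carry out any of this: it simply invokes Proposition~3.6 of \cite{hyland01}, where the verification is already done (in somewhat greater generality, for the Dialectica construction over a lineale). So your route is not really different in spirit---it is the content of the cited result, unpacked in the dependent-type notation of the present paper---but it is different in presentation: you give an explicit hands-on check while the paper defers to the literature. The payoff of your approach is that it makes visible precisely which pieces of $\mathbf{Set}$'s cartesian closed structure and of $R$'s $*$-autonomy are used at each step, which is consonant with the paper's stated aim of keeping the construction concrete and amenable to formalisation; the payoff of the citation is brevity and avoiding a page of index-shuffling that is standard in the dialectica-categories literature.
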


\begin{proof}
By propositions 3.6 of \cite{hyland01}.
\end{proof}

Now we give the dialectica transformations $\fd_f (M)$ and $\fd_{dn} (L)$ of the multiplication and linearisation functors. The operation $\fd_f$ is a straightforward lifting operation. The subscript $f$ stands for \emph{functor} since this construction will be used in section \ref{functor} to give the action of $\fd$ on maps (or functors) of models. Suppose the multiplication functor is $M : R \to S$. The functor $\fd_f (M) : \fd_l (R) \to \fd_i (S)$ acts on objects $\cg^X_Y$ of $\fd_l (R)$ by
\[ (\fd_f (M) (\cg))^X_Y : \binom{x}{y} \mapsto M (\cg^x_y) \]
For the action of $\fd_f (M)$ on morphisms, suppose we have a morphism of $\fd_l (R)$ from $\cg^X_Y$ to $\ch^U_V$ given by $(f,g,\alpha)$ where $f : X \to U$, $g : V \to Y$ and $\alpha_{x,v} : \hom_R \left( \cg^x_{gv}, \ch^{fx}_v \right)$. We need to find an element of
\[ \sum_{\substack{f' : X \to U \\ g' : X \times V \to Y^*}} \prod_{\substack{x : X \\ v : V}} \hom_S \left( \prod_{y \in g'(x,v)} M (\cg^x_y), M (\ch^{f'x}_v) \right) \]
We take $f' = f$ and $g'(x,v)$ to be the multiset containing only $gv$. Then
\[ \prod_{y \in g'(x,v)} M (\cg^x_y) = M (\cg^x_{gv}) \]
and so $M (\alpha_{x,v})$ is a morphism of the correct type.

Suppose the linearisation functor is $L : S \to R$. The functor $\fd_{dn} (L) : \fd_i (S) \to \fd_l (R)$ acts on objects $\cg^X_Y$ by
\[ (\fd_{dn} (L) (\cg))^X_{X \to Y^*} : \binom{x}{f} \mapsto \bigotimes_{y \in fx} L (\cg^x_y) \]
Here $\bigotimes_{y \in fx}$ is the fold of the monoidal product of $R$ over the finite multiset $fx$, where the fold over the empty multiset is the unit $1 \in R$. The subscript $dn$ stands for \emph{Diller-Nahm}, since this definition contains the essence of the Diller-Nahm functional interpretation. The intuitive justification for this definition is that the exponential $\fd_{dn} (L) \circ \fd_f (M)$ should be an interpretation of
\[ \oc \forall y \in fx . \dial{\varphi}{x}{y} \]
which is the Diller-Nahm interpretation of the exponentials in \cite{oliva07b}. Since we are working over set theory we `know' the (finite) size of $fx$, so we can replace the $\forall$ with a folded $\with$. (This is a subtle point: we are simply defining a family of formulas, whereas when using free variables a formula must have a fixed structure.) Then we use the fact that $\oc$ is strong monoidal (the `transmutation principle' of linear logic, see section 7.1 of \cite{mellies09}) to obtain
\[ \bigotimes_{y \in fx} \oc \dial{\varphi}{x}{y} \]
When this is factored as
\[ \bigotimes_{y \in fx} L \left( M \dial{\varphi}{x}{y} \right) \]
the $M$ becomes absorbed into the definition of $\fd_f (M)$, and we are left with $\fd_{dn} (L)$. (We could write it instead as $L \with$, but using $\otimes L$ gives the exponential in figure \ref{connectives} directly. Taking the exponential to be $\otimes \oc$ is preferable to $\oc \with$ because we need not assume that $L$ has products.)

Now suppose we have a morphism of $\fd_i (S)$ from $\cg^X_Y$ to $\ch^U_V$ given by $(f,g,\alpha)$ where $f : X \to U$, $g : X \times V \to Y^*$ and
\[ \alpha_{x,v} : \hom_S \left( \prod_{y \in g(x,v)} \cg^x_y, \ch^{fx}_v \right) \]
We need to find an element of
\[ \hom_{\fd_l (R)} \left( (\fd_{dn} (L) (\cg))^X_{X \to Y^*}, (\fd_{dn} (L) (\ch))^U_{U \to V^*} \right) \]
The witnesses are $f : X \to U$ and $g' : (U \to V^*) \to (X \to Y^*)$ given by
\[ g' = \lambda h^{U \to V^*}, x^X . h(fx) \bind \lambda v^V . g(x,v) \]
Given $x \in X$ and $h : U \to V^*$ we need to find an element of
\[ \hom_R \left((\fd_{dn} (L) (\mathcal G))^x_{g'h}, (\fd_{dn} (L) (\mathcal H))^{fx}_h \right) = \hom_R \left( \bigotimes_{y \in g'hx} L (\mathcal G^x_y), \bigotimes_{v \in h(fx)} L (\mathcal H^{fx}_v) \right) \]
We have
\[ \bigotimes_{v \in h(fx)} L (\alpha_{x,v}) : \hom_R \left( \bigotimes_{v \in h(fx)} L \left( \prod_{y \in g(x,v)} \mathcal G^x_y \right), \bigotimes_{v \in h(fx)} L (\mathcal H^{fx}_v) \right) \]
Here we can use that $L$ is a symmetric monoidal functor to get an element of
\[ \hom_R \left( \bigotimes_{v \in h(fx)} \bigotimes_{y \in g(x,v)} L (\mathcal G^x_y), \bigotimes_{v \in h(fx)} L (\mathcal H^{fx}_v) \right) \]
Finally the left hand side can be written as a single monoidal product over $y \in g'hx$ by definition of the monadic bind.

\begin{lemma}
$\fd_{dn} (L) \dashv \fd_f (M) : \fd_l (R) \to \fd_i (S)$ is a linear-nonlinear adjunction.
\end{lemma}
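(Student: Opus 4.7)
The plan is to construct a natural bijection on hom-sets, exhibit symmetric monoidal structures on both functors, and then verify that the unit and counit of the adjunction respect these structures.

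For the hom-set bijection
\[ \hom_{\fd_l(R)}(\fd_{dn}(L)(\cg^X_Y), \ch^U_V) \cong \hom_{\fd_i(S)}(\cg^X_Y, \fd_f(M)(\ch^U_V)), \]
the left side is a triple $(f : X \to U,\ g : V \to (X \to Y^*),\ \alpha)$ with $\alpha_{x,v} : \bigotimes_{y \in g(v)(x)} L(\cg^x_y) \to \ch^{fx}_v$ in $R$, while the right side is a triple $(f', g', \alpha')$ with $g' : X \times V \to Y^*$ and $\alpha'_{x,v} : \prod_{y \in g'(x,v)} \cg^x_y \to M(\ch^{f'x}_v)$ in $S$. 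I would build the bijection by composing three elementary isomorphisms: (i) currying to identify $V \to (X \to Y^*)$ with $X \times V \to Y^*$; (ii) strong symmetric monoidality of $L$ — which holds because the underlying adjunction is symmetric monoidal, so the left adjoint is automatically strong — to identify $\bigotimes_{y \in m} L(A_y)$ with $L(\prod_{y \in m} A_y)$ for any finite multiset $m$; and (iii) pointwise transposition under $L \dashv M$. Naturality in $\cg$ and $\ch$ follows from the naturality of each of these three components.

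Next, I would equip $\fd_f(M)$ with lax monoidal structure by lifting from $M$ pointwise: the counter-witness component of the comparison $\fd_f(M)(\cg) \times \fd_f(M)(\ch) \to \fd_f(M)(\cg \otimes \ch)$ sends each pair $(f,g) \in (U \to Y) \times (X \to V)$ to the two-element multiset containing the two $\iota_1(fu)$ and $\iota_2(gx)$, and the underlying $S$-component is the lax structure $M(A) \times M(B) \to M(A \otimes B)$ of $M$. For $\fd_{dn}(L)$, the strong monoidal comparison $\fd_{dn}(L)(\cg) \otimes \fd_{dn}(L)(\ch) \to \fd_{dn}(L)(\cg \times \ch)$ exploits the fact that a counter-witness $h : X \times U \to (Y + V)^*$ decomposes uniquely into a pair $(h_Y, h_V)$ of multiset-valued functions by splitting each $h(x,u)$ along the disjoint union, matching exactly the counter-witness type $(U \to (X \to Y^*)) \times (X \to (U \to V^*))$ of the tensor in $\fd_l(R)$; the $R$-component is then a rearrangement of tensor products using the strong monoidality isomorphisms of $L$.

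The remaining step is to check that the unit and counit of $\fd_{dn}(L) \dashv \fd_f(M)$ are monoidal natural transformations, which via the explicit formulas coming from the bijection reduces to the monoidal coherence of $L \dashv M$ together with the unit and associativity laws of the finite multiset monad. The main obstacle I anticipate is the bookkeeping for $\fd_{dn}(L)$: one must verify that the strong monoidal comparison interacts coherently with the associator, unitor, and symmetry of $\fd_l(R)$, and since the tensor there carries the nontrivial counter-witness type $(U \to Y) \times (X \to V)$, the diagram chases require carefully tracking the splitting of multisets under the strong monoidality isos of $L$.
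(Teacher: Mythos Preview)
Your two core computations---the hom-set bijection via currying, strong monoidality of $L$, and transposition under $L \dashv M$; and the strong monoidal comparison for $\fd_{dn}(L)$ via the splitting $(Y+V)^* \cong Y^* \times V^*$---are exactly what the paper does. Where you diverge is in the surrounding packaging: you propose to verify the full definition of a symmetric monoidal adjunction directly (building lax monoidal structure on $\fd_f(M)$ and checking that the unit and counit are monoidal), and you correctly flag this as the part with the heaviest bookkeeping. The paper instead invokes a standard reduction (proposition 14 of \cite{mellies09}): for a linear-nonlinear adjunction it suffices to exhibit the adjunction and show the left adjoint is strong symmetric monoidal. Once those two facts are in hand, the lax structure on the right adjoint and the monoidality of the unit and counit follow automatically. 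So your plan is correct but does more work than necessary; the paper's shortcut eliminates precisely the coherence checks you identified as the main obstacle.
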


\begin{proof}
By proposition 14 of \cite{mellies09} it suffices to prove that $\fd_{dn} (L) \dashv \fd_f (M)$ is an adjunction and $\fd_{dn} (L)$ is strong symmetric monoidal.

The equation for the adjunction is
\[ \hom_{\mathfrak D_l (R)} \left( \mathfrak D_{dn} (L) (\mathcal G^X_Y), \mathcal H^U_V \right) \cong \hom_{\mathfrak D_i (S)} \left( \mathcal G^X_Y, \mathfrak D_f (M) (\mathcal H^U_V) \right) \]
We evaluate
\[ \hom_{\fd_l (R)} \left (\fd_{dn} (L) (\cg^X_Y), \ch^U_V \right) = \sum_{\substack{f : X \to U \\ g : V \to (X \to Y^*)}} \prod_{\substack{x : X \\ v : V}} \hom_R \left( \bigotimes_{y \in gvx} L (\cg^x_y), \ch^{fx}_v \right) \]
and
\[ \hom_{\fd_i (S)} \left( \cg^X_Y, \fd_f (M) (\ch^U_V) \right) = \sum_{\substack{f : X \to U \\ g : X \times V \to Y^*}} \prod_{\substack{x : X \\ v : V}} \hom_S \left( \prod_{y \in g(x,v)} \cg^x_y, M (\ch^{fx}_v) \right) \]
These are isomorphic using $L \dashv M$ and the fact that $L$ is strong monoidal.

To prove that $\fd_{dn} (L)$ is strong monoidal we must show that
\[ \fd_{dn} (L) (\cg^X_Y) \otimes \fd_{dn} (L) (\ch^U_V) \cong \fd_{dn} (L) (\cg^X_Y \with \ch^U_V) \]
We evaluate
\[ \left( \fd_{dn} (L) (\cg) \otimes \fd_{dn} (L) (\ch) \right)^{X \times U}_{(X \times U \to Y^*) \times (X \times U \to V^*)} : \binom{x,u}{f,g} \mapsto \bigotimes_{y \in f(x,u)} L (\cg^x_y) \otimes \bigotimes_{v \in g(x,u)} L (\ch^u_v) \]
and
\[ \fd_{dn} (L) (\cg \with \ch)^{X \times U}_{X \times U \to (Y + V)^*} : \binom{x,u}{h} \mapsto \bigotimes_{z \in h(x,u)} \begin{dcases*}
L (\cg^x_z) & if $z \in Y$ \\
L (\ch^u_z) & if $z \in V$
\end{dcases*} \]
These are isomorphic due to the natural isomorphism $Y^* \times V^* \cong (Y + V)^*$ (note that this isomorphism does not hold if we replace finite multisets with finite ordered lists, ie. free commutative monoids by free noncommutative monoids). Finally, the symmetry of $\fd_{dn} (L)$ also inherits easily from that of $L$.
\end{proof}

We can therefore derive the interpretation of $!$ as the composition $\fd_{dn} (L) \circ \fd_f (M)$. Given an object $\cg^x_y$, its exponential is
\[ (! \cg)^X_{X \to Y^*} : \binom{x}{f} \mapsto \bigotimes_{y \in fx} ! \cg^x_y \]
where the exponential in the underlying model is $! = L \circ M$.

It is worth noting that, as in chapter 4 of \cite{depaiva91a}, the functor $\fd_{dn} (L)$ factors into three parts $\fd_{dn} (L) = B \circ A \circ \fd_f (L)$ where $A$ and $B$ (called $T$ and $S$ in \cite{depaiva91a}) are endofunctors on $\fd_l (R)$ given respectively by
\[ (A (\mathcal G))^X_{Y^*} : \binom{x}{s} \mapsto \bigotimes_{y \in s} \mathcal G^x_y \]
and
\[ (B (\mathcal G))^X_{X \to Y} : \binom{x}{f} \mapsto \mathcal G^x_{fx} \]
We can interpret $A$ and $B$ game-semantically as giving two different advantages to Abelard. $A$ allows Abelard to play several moves, and $B$ allows Abelard to observe Eloise's move. Both of these are expressed by monads on the category of sets, respectively the finite multiset monad and the reader monad $(X \to)$. The exponential of $\fd_l (R)$ therefore factors into four parts as
\[ B \circ A \circ \fd_f (L) \circ \fd_f (M) \]
The functors $A$ and $B$ have much structure in their own right: they are both comonads on $\fd_l (R)$ with a distributivity law between them making $B \circ A$ into another comonad. However $B \circ A$ is a linear exponential comonad (which is a direct categorical semantics of the exponential, see \cite{hyland01}), whereas $A$ and $B$ individually are not. The entire reason we also compose with $\fd_f (L) \circ \fd_f (M) = \fd_f (L \circ M)$, which after all requires more structure in the underlying model, is to obtain the completeness theorem in section \ref{complete}.

The lemmas in this section add up to a soundness theorem.

\begin{theorem}
If $\mathcal R$ is a sound model of $\MELL$ then $\fd (\mathcal R)$ is a sound model of $\LL$.
\end{theorem}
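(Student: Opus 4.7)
The plan is to assemble the theorem directly from the lemmas proved throughout the section, since together they supply exactly the data needed by Benton's linear-nonlinear presentation of a model of $\LL$. Recall that a sound model of $\LL$ consists of (i) a $*$-autonomous category with finite products and coproducts on the linear side, (ii) a category with finite products on the nonlinear side, and (iii) a symmetric monoidal adjunction between them whose linearisation functor is strong monoidal.

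First I would note that the linear component $\fd_l(R)$ is $*$-autonomous by the $*$-autonomy lemma, and has finite products and coproducts by the first lemma of section 2; moreover the nonlinear component $\fd_i(S)$ has finite products by the second lemma of section 2. Next, the adjunction lemma established that $\fd_{dn}(L) \dashv \fd_f(M) : \fd_l(R) \to \fd_i(S)$ is a linear-nonlinear adjunction, with $\fd_{dn}(L)$ strong symmetric monoidal. This already supplies all the abstract categorical structure required to interpret every connective of $\LL$ in $\fd(\mathcal R)$.

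The only remaining work is bookkeeping: to verify that the concrete formulas of figure \ref{connectives} for the additives, multiplicatives, and exponentials agree with the structure produced abstractly by the lemmas. For the multiplicatives and duals this is immediate from the construction in the $*$-autonomy lemma (which cites \cite{hyland01}). For the additives one checks that the $\with$ and $\oplus$ formulas in figure \ref{connectives} represent the finite product and coproduct supplied by the first lemma, using the case-analysis form of the indexing sets $Y+V$ and $X+U$. For $\oc$ one composes $\fd_{dn}(L)\circ\fd_f(M)$ on an object $\cg^X_Y$: by construction $\fd_f(M)(\cg)^x_y = M(\cg^x_y)$, and then $\fd_{dn}(L)$ yields indexing set $X\to Y^*$ with fibre $\bigotimes_{y\in fx} L(M(\cg^x_y)) = \bigotimes_{y\in fx} \oc\cg^x_y$, matching figure \ref{connectives}; the interpretation of $\wn$ follows by duality using $^\bot$.

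I do not expect a genuinely hard step: the work was concentrated in the three preceding lemmas and in the verification that $\fd_{dn}(L)$ is strong monoidal, both of which are already in place. The most delicate point that might warrant explicit mention is ensuring that the symmetric monoidal adjunction actually produces the comonad $\oc = \fd_{dn}(L)\circ\fd_f(M)$ of figure \ref{connectives}, i.e.\ that the linear exponential comonad structure on this composite transports from that of $L\circ M$ in $\mathcal R$; this is automatic from the general fact (invoked via \cite{benton94,mellies09}) that any symmetric monoidal adjunction induces a linear exponential comonad on its right-hand side.
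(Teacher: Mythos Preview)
Your proposal is correct and matches the paper's approach exactly: the paper's own ``proof'' is just the sentence ``The lemmas in this section add up to a soundness theorem,'' and you have spelled out precisely which lemmas supply which pieces of the linear-nonlinear structure. Your additional bookkeeping (checking that the concrete formulas of figure~\ref{connectives} agree with the abstractly-produced structure, and the remark about the induced comonad) is more detail than the paper gives, but it is accurate and in the same spirit.
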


%===========================================================
\section{Games with bidding}\label{games}
%===========================================================

In section \ref{complete} we will investigate applying the transformation $\fd$ to models which are complete (that is truth implies provability, which is a weaker property than \emph{full completeness} which is more often considered). In practice this means letting $\mathcal R$ be a game model. In this section we give some general remarks about $\fd (\mathcal R)$ when $\mathcal R$ is a game model.

In general, a game model is a category $R$ whose objects are games, and whose morphisms are (relative) winning strategies. Thus logically formulas are denoted by games and proofs by winning strategies. The denotation of linear negation is interchange of players (at least for classical linear logic), and the denotation of $\otimes$ is some form of concurrent play, making $R$ into a *-autonomous category. For models which have additives the product $\mathcal G \with \mathcal H$ is usually denoted by a game in which Abelard chooses which of the two games will be played, and for $\mathcal G \oplus \mathcal H$ Eloise makes the choice. The exponential is often similar to an infinite tensor product. The point of making these informal observations is that they are preserved under the transformation $\fd$.

We begin by considering the two-element boolean algebra $\mathbb B$ as a degenerate game model containing only two games: one which Eloise wins immediately, and one which Abelard wins immediately. Thus we can see $\fd (\mathbb B)$, which is called $\mathbf G (\mathcal C)$ in the terminology of \cite{depaiva91a} (where $\mathcal C$ is the category of sets or another suitable model of dependent type theory), as a model of games with bidding in which the games contain only the bidding round, and after the bidding round one player is declared to have won. The possibility of viewing dialectica categories as categories of games has been discussed in several places, and in particular in the final section of \cite{blass91}, and this section greatly extends that idea.

One issue with viewing dialectica categories as games is the strange `causality' in a game such as $\cg^X_Y \multimap \ch^U_V$, in which $u$ depends on $x$ but not $v$, and $y$ depends on $v$ but not $x$. One way to view the strange dynamics of this game is as a generalisation of history-freeness in which the moves are chosen in the order $(x,u,v,y)$, where Abelard's strategy to choose $x,v$ is history-free and Eloise's strategy may depend on the most recent move but not the remainder of the history. Alternatively we can imagine the bidding round to be played by two teams of two players (like Bridge) with a particular message-passing protocol
\begin{center}\begin{tikzpicture}[node distance=2cm, auto]
\node (X) {$X$};
\node (C) [below of=X] {};
\node (U) [right of=C] {$U$};
\node (Y) [below of=C] {$Y$};
\node (V) [left of=C] {$V$};
\draw [->] (X) to node {} (U);
\draw [->] (Y) to node {} (V);
\end{tikzpicture}\end{center}
Partners sit opposite each other, with $X$ and $Y$ representing Abelard and $U$ and $V$ representing Eloise, and the arrows representing the direction of message-passing. Unfortunately both of these intuitions (history-freeness and message-passing) break down when we consider higher order bids (that is, bids which are functions depending on other functions). There is a general but less satisfactory intuition in these cases: the players submit (higher order) computer programs, which are finite representations of their strategy, to play on their behalf.

Now we consider informally a `general non-degenerate game model'. The conclusion is that the construction $\fd$, which can be applied to any model, preserves the property of `being a game model'. For a concrete game model these informal remarks could be made precise: the simplest example is the category of Blass games of \cite{blass91}; in section \ref{complete} we consider the category of Hyland-Ong games \cite{hyland93} and the category of asynchronous games of \cite{mellies04}.

An object $\mathcal G^X_Y$ of $\fd_l (R)$ consists of sets of bids $X$ and $Y$ for Eloise and Abelard, together with a game $\mathcal G^x_y$ in the underlying model for each pair of bids. Thus a winning strategy for Eloise consists of a bid $x \in X$, together with a winning strategy $\sigma_y$ for $\mathcal G^x_y$ for every bid $y$ of Abelard. Thus $\mathcal G^X_Y$ can be seen as a game with bidding: first Eloise and Abelard simultaneously bid, and then the pair of chosen bids determines precisely which subsequent game will be played. (Very informally this is somewhat like the game of Bridge: there is an initial bidding round which determines exactly which variant of Whist will be played.)

The negation of $\fd_l (R)$ is to interchange players in the bidding round and then apply the negation of $R$. Thus when $R$ is a game model the negation of $\fd_l (R)$ overall is simply interchange of players in the compound game. The other connectives which behave very cleanly are the additives: they are similar to the additives in a general game model except that the choice of which game to play occurs \emph{simultaneously} with the other bids. Thus for example in the game $\mathcal G^X_Y \with \mathcal H^U_V$ Abelard chooses a game and a bid for that game, but since Eloise bids simultaneously she must choose a bid for both games. Thus a winning strategy for Eloise in $\mathcal G^X_Y \with \mathcal H^U_V$ consists of a pair of bids $(x, u)$ together with winning strategies for both $\mathcal G^x_y$ and $\mathcal H^u_v$.

The denotation of the tensor product $\mathcal G^X_Y \otimes \mathcal H^U_V$ is more complicated. Eloise simply bids a pair $(x, u)$. Simultaneously Abelard must bid a pair of functions $f : U \to Y$ and $g : X \to V$, and then the games $\mathcal G^x_{fu}$ and $\mathcal H^u_{gx}$ are played in parallel in the sense specified by $R$. Similarly the exponential $\oc \mathcal G^X_Y$ is played as follows. Firstly Eloise chooses a bid $x \in X$. Then Abelard observes this and chooses a finite multiset $y_1, \ldots, y_n \in Y$. For each $y_i$ there is an exponential $\oc \mathcal G^x_{y_i}$, which will be similar to the parallel composition of infinitely many copies of $\mathcal G^x_{y_i}$. Then each of the $\oc \mathcal G^x_{y_i}$ is played in parallel (but typically a different sense of parallel than is used for exponentials), leading to $n \cdot \omega$ games being played in parallel. However the notion of winning strategy for Eloise in these games will depend on exactly what notion of parallelism is used in $R$, so it is difficult to say more in general.

As explained above, in some cases it is possible to consider this as a game played by two pairs of partners with a message-passing protocol, but in general it is necessary to consider functions which can depend on other functions in a higher-order way. Thus from a game-semantic perspective it will be more satisfying to replace the category of sets with a different locally cartesian closed category in which functions contain only a finite amount of information. Particularly interesting would be to use the recent work in progress of Abramsky and Jagadeesan on game semantics of dependent type theory. This would lead to a two-layered game model in which the bidding round has finer structure and the bids themselves specify strategies for sub-games. The difficulty would be to find a suitable sense in which $R$ is enriched and fibered over the model of dependent types.

%===========================================================
\section{Relative completeness for additive-free fragments}\label{complete}
%===========================================================

\begin{definition}[Complete model]
Let $\mathcal R$ be a model of $\LL$. A mapping from atoms to objects of $R$ is called a valuation in $\mathcal R$. Given a valuation $v$, we can extend it inductively to an interpretation of formulas in $\mathcal R$, denoted $\llbracket \varphi \rrbracket_v$ or simply $\llbracket \varphi \rrbracket$.

$\mathcal R$ is called a complete model of $\LL$ if for all formulas $\varphi, \psi$, if $\hom_R (\llbracket \varphi \rrbracket_v, \llbracket \psi \rrbracket_v)$ is nonempty for all valuations $v$ then the sequent $\varphi \vdash \psi$ is derivable in $\LL$. Completeness for $\MELL$ and other fragments is defined similarly.
\end{definition} 

A \emph{characterisation theorem} for a functional interpretation is a result saying that the equivalence between $\varphi$ and its functional interpretation $\exists x \forall y . \dial{\varphi}{x}{y}$ is derivable in some system, usually a base language like $\mathbf{HA}^\omega$ extended with \emph{characterisation principles}, which are axioms validated by the functional interpretation such as the axiom of choice, Markov's principle and independence of premise. In order to obtain the statement of the following lemma we take the logical formula
\[ \varphi \leftrightarrow \exists x \forall y . \dial{\varphi}{x}{y} \]
and split the bi-implication into its defining conjunction, then in each part we prenex the quantifiers and interpret them as dependent types.

The characterisation theorem for classical linear logic in \cite{oliva08} uses not $\exists x \forall y$ but a Henkin quantifier, $\aeq^x_y$, and so this `rearrangement' is unsound. The result we see is that this lemma fails to extend from $\MELL$ to $\LL$. (Given that this simultaneity is at the heart of the functional interpretations of classical linear logic, it is remarkable that this method works at all.) See section \ref{additives} for a discussion of how to extend the completeness theorem to include additives by correctly interpreting the simultaneous quantifier.

\begin{lemma}
Let $\mathcal R$ be a model of $\MELL$ and let $v$ be a valuation in $\mathcal R$. Let $\varphi$ be a formula of $\MELL$ with interpretation $\dial{\varphi}{X}{Y}$ in $\fd (\mathcal R)$, where the interpretation of an atomic proposition is
\[ \dial{p}{\{ * \}}{\{ * \}} : \binom{*}{*} \mapsto v (p) \]
Then the types
\[ \sum_{x : X} \prod_{y : Y} \hom_R (\llbracket \varphi \rrbracket, \dial{\varphi}{x}{y}) \]
and
\[ \sum_{y : Y} \prod_{x : X} \hom_R (\dial{\varphi}{x}{y}, \llbracket \varphi \rrbracket) \]
are inhabited.
\end{lemma}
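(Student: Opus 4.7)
The plan is to proceed by structural induction on $\varphi$, proving both statements simultaneously. In the base case $\varphi = p$ we have $X = Y = \{*\}$ with $\dial{p}{*}{*} = v(p) = \llbracket p \rrbracket$, so the identity morphism inhabits both types. For $\varphi = \psi^\bot$, the object $(\cg^\bot)^Y_X$ swaps the roles of the index sets and applies $(-)^\bot$ pointwise, while $\llbracket \psi^\bot \rrbracket = \llbracket \psi \rrbracket^\bot$; the contravariant equivalence $\hom_R(A^\bot, B^\bot) \cong \hom_R(B, A)$ then exactly interchanges the two statements, both of which hold for $\psi$ by IH.

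For the multiplicatives, the strategy is to push IH data through the functorial action of $\otimes$ and $\parr$. For $\varphi \otimes \psi$ and the first type, let $(x_0, \alpha)$ and $(u_0, \beta)$ be the IH witness data; take the witness $(x_0, u_0) \in X \times U$, and for each counter-witness $(f, g)$ produce $\alpha_{fu_0} \otimes \beta_{gx_0} : \llbracket \varphi \rrbracket \otimes \llbracket \psi \rrbracket \to \dial{\varphi}{x_0}{fu_0} \otimes \dial{\psi}{u_0}{gx_0}$. For the second type, let $(y_0, \alpha')$ and $(v_0, \beta')$ be IH counter-witness data and take the pair of constant functions $(\lambda u.\, y_0,\, \lambda x.\, v_0)$ as counter-witness, with $\alpha'_x \otimes \beta'_u$ providing each component morphism. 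The $\parr$ case is fully dual, with constant functions appearing instead in the witness direction.

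The exponential case is the main point, and it is here that the comonoid structure of $\oc$ arising from the linear-nonlinear adjunction is essential. For $\varphi = \oc \psi$ recall $\dial{\oc \psi}{x}{f} = \bigotimes_{y \in fx} \oc \dial{\psi}{x}{y}$ and $\llbracket \oc \psi \rrbracket = \oc \llbracket \psi \rrbracket$. For the first type, take the IH witness $x_0$ and maps $\alpha_y$; given $f : X \to Y^*$ with $fx_0 = [y_1, \ldots, y_n]$ we need a morphism $\oc \llbracket \psi \rrbracket \to \bigotimes_{i=1}^{n} \oc \dial{\psi}{x_0}{y_i}$. Since $M$ lands in the cartesian category $S$, each $\oc A = L(MA)$ carries a natural commutative comonoid structure whose iterated comultiplication and counit yield a map $\oc \llbracket \psi \rrbracket \to \bigotimes_{i=1}^{n} \oc \llbracket \psi \rrbracket$; post-composing with $\bigotimes_i \oc \alpha_{y_i}$ gives the required morphism. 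For the second type, use the IH counter-witness $y_0$ and take the constant function $f(x) = [y_0]$, so that $\bigotimes_{y \in fx} \oc \dial{\psi}{x}{y}$ collapses to $\oc \dial{\psi}{x}{y_0}$ and $\oc \alpha'_x$ suffices. The case $\wn \psi$ is handled dually, either using the $\parr$-monoid structure on $\wn A = (\oc A^\bot)^\bot$ or, more cheaply, by reduction to $\oc \psi^\bot$ via the negation case.

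The main obstacle is precisely this comonoid step: it is the only place where we use structure beyond bare functoriality of the connectives, and it requires the full $\MELL$ structure of the underlying model. Intuitively it mirrors the Diller-Nahm character of $\fd_{dn} (L)$, duplicating $\oc \llbracket \psi \rrbracket$ into as many copies as the multiset $fx_0$ demands. This is also exactly the step that breaks in the presence of additives, because collapsing the bidding round to a prenex $\exists x \forall y$ corresponds to trivialising the Henkin quantifier $\aeq^x_y$ that the paper has just warned against.
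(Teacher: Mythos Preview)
Your proof is correct and follows essentially the same inductive argument as the paper: the base, negation, and tensor cases match the paper's treatment almost verbatim, and your exponential case spells out explicitly the comonoid map $\oc \llbracket \psi \rrbracket \to \bigotimes_{i} \oc \llbracket \psi \rrbracket$ that the paper packages as the single line ``$\bigotimes_{y \in fx} \oc \llbracket \varphi \rrbracket \cong \oc \llbracket \varphi \rrbracket$'' (only the forward map is actually needed). One small quibble with your closing commentary: it is not the comonoid step that fails for additives but rather the additive clauses themselves---for $\oplus$ in direction (1) there is no uniform choice of $z \in X + U$---so the diagnosis there is slightly misplaced, though this does not affect the proof.
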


\begin{proof}
These are proved simultaneously by induction on $\varphi$. In the base case we have $\varphi = p$ is an atom, and the point $*$ and identity morphism witnesses both (1) and (2).

In the negation case for (1) the inductive hypothesis for (2) gives $y \in Y$ together with morphisms $\pi_x : \hom_R (\dial{\varphi}{x}{y}, \llbracket \varphi \rrbracket)$. Then $\pi_x^\bot : \hom_R (\llbracket \varphi^\bot \rrbracket, \dial{\varphi^\bot}{y}{x})$. The case for (2) is symmetric.

For (1) of $\otimes$ the inductive hypothesis gives $x$ and $u$ together with morphisms $\pi_y : \hom_R (\llbracket \varphi \rrbracket, \dial{\varphi}{x}{y})$ and $\sigma_v : \hom_R (\llbracket \psi \rrbracket, \dial{\psi}{u}{v})$. Then for each $f : U \to Y$ and $g : X \to V$ we have
\[ \pi_{fu} \otimes \sigma_{gx} : \hom_R \left( \llbracket \varphi \otimes \psi \rrbracket, \dial{\varphi \otimes \psi}{x,u}{f,g} \right) \]

For (2) of $\otimes$ the inductive hypothesis gives $y$ and $v$ together with morphisms $\pi_x : \hom_R (\dial{\varphi}{x}{y}, \llbracket \varphi \rrbracket)$ and $\sigma_u : \hom_R (\dial{\psi}{u}{v}, \llbracket \psi \rrbracket)$. Define $f : U \to Y$ by $fu = y$ and $g : X \to V$ by $gx = v$. Then for each $(x,u)$ we have
\[ \pi_x \otimes \sigma_u : \hom_R \left( \dial{\varphi \otimes \psi}{x,u}{f,g}, \llbracket \varphi \otimes \psi \rrbracket \right) \]

For (1) of $\oc$, by the inductive hypothesis we have $x$ together with morphisms in $\pi_y : \hom_R (\sembrack{\varphi}, \dial{\varphi}{x}{y})$. Let $f : X \to Y^*$. We have
\[ \bigotimes_{y \in fx} \oc \pi_y : \hom_R \left( \bigotimes_{y \in fx} \oc \llbracket \varphi \rrbracket, \bigotimes_{y \in fx} \oc \dial{\varphi}{x}{y} \right) \]
Since $\mathcal R$ is a model of $\MELL$ we have $\bigotimes_{y \in fx} \oc \llbracket \varphi \rrbracket \cong \oc \llbracket \varphi \rrbracket$ and we are done.

For (2) of $\oc$, by the inductive hypothesis we have $y$ together with morphisms $\pi_x : \hom_R (\dial{\varphi}{x}{y}, \llbracket \varphi \rrbracket)$. Take $f$ to be the constant function returning the singleton multiset containing $y$. Then we have
\[ \oc \pi_x : \hom_R (\bigotimes_{y \in fx} \oc \dial{\varphi}{x}{y}, \oc \llbracket \varphi \rrbracket) \]
and we are done.
\end{proof}

\begin{theorem}[Relative completeness]
Let $\mathcal R$ be a model of $\MELL$ and let $\varphi$ be a formula of $\MELL$ which is true in $\fd (\mathcal R)$. Then $\varphi$ is true in $\mathcal R$.
\end{theorem}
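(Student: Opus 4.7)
My plan is to reduce the theorem essentially directly to part (2) of the preceding lemma. Truth of $\varphi$ in $\fd(\R)$ means that the hom-set $\hom_{\fd_l(R)}\bigl( 1^{\{*\}}_{\{*\}}, \dial{\varphi}{X}{Y} \bigr)$ is inhabited. Unpacking the definition of morphisms in $\fd_l(R)$, and noting that any map $Y \to \{*\}$ is forced, such a morphism amounts precisely to the data of a witness $x_0 \in X$ together with a family of $R$-morphisms $h_y : 1 \to \dial{\varphi}{x_0}{y}$ indexed by $y \in Y$.

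Next, I apply part (2) of the preceding lemma directly to $\varphi$: this hands me a counter-witness $y_0 \in Y$ together with a family $\pi_x : \dial{\varphi}{x}{y_0} \to \sembrack{\varphi}$ in $R$, one morphism for each $x \in X$. Specialising to $x = x_0$ and composing with $h_{y_0}$ yields a morphism $1 \to \sembrack{\varphi}$ in $R$, which is precisely truth of $\varphi$ in $\R$.

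There is no serious obstacle here: the inductive construction of the two mutually dual families of morphisms is already packaged into the preceding lemma, and the theorem drops out by pairing one witness from each side of the induction. The only pedantic point to verify is that the lemma's stipulated interpretation of atoms, $\dial{p}{\{*\}}{\{*\}} : \binom{*}{*} \mapsto v(p)$, coincides with $\sembrack{p}_v = v(p)$, so that the inductive interpretation in $\fd(\R)$ really is the one that the lemma compares against $\sembrack{\varphi}_v$ in $\R$; this is immediate from the construction. Note also that the same recipe would extend to arbitrary sequents $\varphi \vdash \psi$, by combining part~(1) applied to $\varphi$ with part~(2) applied to $\psi$ and inserting the given $\fd_l(R)$-morphism in the middle.
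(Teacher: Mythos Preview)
Your proof is correct and follows essentially the same approach as the paper's: fix a valuation, unpack truth in $\fd(\R)$ into a witness $x_0$ plus a family $1 \to \dial{\varphi}{x_0}{y}$, invoke part~(2) of the lemma to obtain $y_0$ and a family $\dial{\varphi}{x}{y_0} \to \sembrack{\varphi}$, and compose at $(x_0,y_0)$. The only presentational difference is that the paper states the quantification over valuations explicitly at the outset and the end, whereas you fold it into the remark about atoms; your additional observation about general sequents $\varphi \vdash \psi$ is a nice bonus not spelled out in the paper.
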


\begin{proof}
Let $v$ be a valuation in $\mathcal R$, and let $\varphi$ be a formula of $\MELL$ with interpretation $\dial{\varphi}{X}{Y}$ in $\fd (\mathcal R)$ using the same interpretation of atomic propositions defined in the lemma. Since $\varphi$ is true in $\fd (\mathcal R)$ we have a winning bid $x : X$ together with winning strategies
\[ \pi_y : \hom_R (1, \dial{\varphi}{x}{y}) \]
From (2) of the lemma we have $y : Y$ together with winning strategies
\[ \sigma_x : \hom_R (\dial{\varphi}{x}{y}, \sembrack{\varphi}) \]
Therefore
\[ \sigma_x \circ \pi_y : \hom_R (1, \sembrack{\varphi}) \]
Since this holds for every valuation, $\varphi$ is true in $\mathcal R$.
\end{proof}

Let $\HO$ be the category of Hyland-Ong games and history-free, uniformly winning strategies \cite{hyland93}, with the the identity functor considered as an exponential. Then $\fd (\HO)$ is the model of `Hyland-Ong games with bidding'. (As a linear-nonlinear adjunction, the model of Hyland-Ong games has $R = S = \HO$, and $L = M$ is the identity functor.)

\begin{corollary}
$\fd (\HO)$ is a sound model of $\LL$ and a complete model of $\MLL$.
\end{corollary}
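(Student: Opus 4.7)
The plan is to obtain both parts by chaining together the Soundness Theorem and the Relative Completeness Theorem established above with the known $\MLL$-completeness of the Hyland-Ong model.

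For soundness, I view $\HO$ as a model of $\MELL$ via the identity linear-nonlinear adjunction, i.e. with $S = R = \HO$ and $L = M = \id$. The underlying category is $*$-autonomous, has finite products (HO games admit $\with$), and the identity adjunction is trivially a symmetric monoidal adjunction, so this is a sound model of $\MELL$. Soundness of $\fd(\HO)$ for $\LL$ then follows immediately from the Soundness Theorem, since $\fd$ was shown to preserve soundness from $\MELL$ to $\LL$.

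For completeness, let $\varphi \vdash \psi$ be an $\MLL$-sequent whose interpretation in $\fd(\HO)$ admits a morphism for every valuation, equivalently the condition that $\varphi \multimap \psi$ is true in $\fd(\HO)$. Since $\MLL \subset \MELL$, I apply the Relative Completeness Theorem with $\mathcal R = \HO$ to deduce that $\varphi \multimap \psi$ is true in $\HO$ for every valuation, where an $\HO$-valuation $w$ is lifted to an $\fd(\HO)$-valuation by $\dial{p}{\{*\}}{\{*\}} : \binom{*}{*} \mapsto w(p)$ exactly as in the preceding Lemma. Then the $\MLL$-completeness of the Hyland-Ong model from \cite{hyland93} forces $\varphi \vdash \psi$ to be derivable in $\MLL$.

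The main obstacle, if any, is bookkeeping: one must check that for $\MLL$-formulas the induction cases for $\oc$ in the preceding Lemma are never triggered, so that the relative completeness argument transfers unchanged to the $\MLL$ fragment, and that the choice of identity exponential in $\HO$ does not cause the $\fd(\HO)$-interpretation of an $\MLL$-formula to depart from what its $\MLL$-interpretation would be. Both are routine.
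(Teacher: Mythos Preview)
Your proposal is correct and matches the paper's intended argument exactly: the paper states this as an immediate corollary, having just set up $\HO$ with the identity exponential as a linear-nonlinear adjunction, so that soundness follows from the Soundness Theorem and $\MLL$-completeness follows by composing Relative Completeness with the $\MLL$-completeness of $\HO$ from \cite{hyland93}. Your bookkeeping remarks are accurate and indeed routine.
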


Notice that because the posetal reflection of $\HO$ is a \emph{lineale} in the sense of \cite{depaiva91b} (including having a trivial exponential), the category $\fd (\HO)$ is an example of the construction in that paper (modulo size issues). However examples of this kind have not been considered before, and in particular the completeness result is new.

Let $\mathbf{AG}$ be the category $Z$ of asynchronous games and (equivalence classes of) innocent winning strategies \cite{mellies04}. This is a sound model of $\LL$ which is proven in \cite{mellies05} to be complete for $\MELL$. That paper also provides a small variation which is complete for $\LL$, although using that model will not be necessary for our purposes.

\begin{corollary}
$\fd (\mathbf{AG})$, the category of asynchronous games with bidding, is a sound model of $\LL$ and a complete model of $\MELL$.
\end{corollary}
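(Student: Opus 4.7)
The plan is to obtain this corollary as an immediate consequence of the two preservation results just established: the soundness theorem at the end of Section~\ref{sound} and the relative completeness theorem, instantiated at the specific underlying model $\AG$. No new constructions are required; the work lies entirely in the cited facts about $\AG$.

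For soundness, I would observe that $\AG$ is asserted to be a sound model of $\LL$, and hence in particular a sound model of $\MELL$. To apply the soundness theorem we need $\AG$ to be presented in linear-nonlinear form, but this is automatic: $\AG$ is $*$-autonomous with a linear exponential comonad, so the co-Kleisli adjunction mentioned in Section~\ref{sound} produces the required pair $L \dashv M$. Applying the soundness theorem then gives that $\fd(\AG)$ is a sound model of $\LL$.

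For completeness on $\MELL$, let $\varphi$ be a formula of $\MELL$ that is true in $\fd(\AG)$. The relative completeness theorem (instantiated at $\mathcal{R} = \AG$) yields that $\varphi$ is true in $\AG$. Since $\AG$ is a complete model of $\MELL$ by \cite{mellies05}, $\varphi$ is derivable in $\MELL$. This is exactly the completeness statement we need.

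The only delicate point worth flagging is that the relative completeness theorem, and hence this corollary, does not extend to include additives, because the supporting lemma fails in the presence of $\with$ and $\oplus$ (the Henkin-quantifier issue alluded to just before the lemma). Thus there is no obstacle to the corollary as stated, but one must resist the temptation to promote it to a completeness result for $\LL$; a genuine additive completeness argument would require the separate treatment promised in Section~\ref{additives}.
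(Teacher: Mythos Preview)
Your proposal is correct and matches the paper's intended argument: the corollary is stated without proof because it is an immediate instantiation of the soundness theorem (Section~\ref{sound}) and the relative completeness theorem at $\mathcal R = \AG$, together with the cited facts from \cite{mellies04, mellies05} that $\AG$ is sound for $\LL$ and complete for $\MELL$. Your remarks on the co-Kleisli presentation and the failure for additives are accurate elaborations but not additional proof steps.
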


A large part of the motivation for this paper is to introduce the category $\fd (\mathbf{AG})$ and prove its soundness. It is an interesting model which will be studied in detail by the author in a follow-up paper: in particular there is a way to analyse the formulas containing additives which are valid in the model. See section \ref{additives} for a summary of the argument.

%===========================================================
\section{$\fd$ is a functor}\label{functor}
%===========================================================

Given a model $\R$ of $\MELL$, presented as a linear-nonlinear adjunction, we have defined a model $\fd (\R)$ of $\LL$. Since a collection of models forms a category we can ask whether $\fd$ is a functor. The answer is `yes' for the strongest notion of a morphism of models: a pair of functors which commute with all of our structure. Results of this kind are standard, and appear as early as \cite{scott78}. In the next section we will need a weaker notion of morphism of models of $\MLL$, namely lax monoidal functors between $*$-autonomous categories.

Since models are pairs of structured categories, they moreover form a 2-category, with 1-cells given by pairs of monoidal functors satisfying suitable conditions, and 2-cells given by pairs of natural transformations. We will leave the consideration of 2-categorical issues for later work, but it should be noted that most of the diagrams in this section and the next commute only up to natural isomorphism.

This section and the next do not contain all cases of the proofs (which would take another paper), but highlight the most interesting cases. Most of the proofs amount to showing that certain (sometimes quite formidable) dependent types are inhabited, and thus are natural candidates for formalisation in a dependently typed programming language, with suitable libraries for monoidal category theory and 2-category theory. The author intends to carry this out in the future.

\begin{definition}[Morphism of linear-nonlinear adjunctions]
Let $L \dashv M : R \to S$ and $L' \dashv M' : R' \to S'$ be linear-nonlinear adjunctions. A morphism $(F,G)$ from the former to the latter consists of functors
\begin{center}
\begin{tikzpicture}
\node (A) {$S$};
\node (B) [right=1cm of A] {$\bot$};
\node (C) [right=1cm of B] {$R$};
\node (D) [below=2cm of A] {$S'$};
\node (E) [below=2cm of B] {$\bot$};
\node (F) [below=2cm of C] {$R'$};
\draw [->] (A) edge [bend left=45] node [below] {$L$} (C);
\draw [->] (C) edge [bend left=45] node [above] {$M$} (A);
\draw [->] (D) edge [bend left=45] node [below] {$L'$} (F);
\draw [->] (F) edge [bend left=45] node [above] {$M'$} (D);
\draw [->] (A) to node [left] {$G$} (D);
\draw [->] (C) to node [right] {$F$} (F);
\end{tikzpicture}
\end{center}
such that
\begin{enumerate}
\item $F$ is a monoidal functor
\item $F$ and $G$ are cartesian monoidal functors
\item The following diagram commutes:
\begin{center}
\begin{tikzpicture}[node distance=3cm, auto,
implies/.style={double distance=4pt,-implies}
]
\node (A) {$R$};
\node (B) [right of=A] {$S$};
\node (C) [right of=B] {$R$};
\node (D) [below of=A] {$R'$};
\node (E) [below of=B] {$S'$};
\node (F) [below of=C] {$R'$};
\draw [->] (A) to node {$M$} (B);
\draw [->] (B) to node {$L$} (C);
\draw [->] (A) to node {$F$} (D);
\draw [->] (B) to node {$G$} (E);
\draw [->] (C) to node {$F$} (F);
\draw [->] (D) to node {$M'$} (E);
\draw [->] (E) to node {$L'$} (F);
\end{tikzpicture}
\end{center}
\end{enumerate}
(If we weaken this to having natural transformations $M' \circ F \implies G \circ M$ and $L' \circ G \implies F \circ L$ we obtain the linear-nonlinear equivalent of the `map of models' of \cite{hyland01}.)

The category of linear-nonlinear adjunctions and morphisms will be called $\LL\Mod$. The (larger) category of linear-nonlinear adjunctions in which $R$ and $R'$ do not necessarily have products (and $F$ is not necessarily cartesian monoidal) will be called $\MELL\Mod$. There is a forgetful functor $U : \LL\Mod \to \MELL\Mod$.
\end{definition}

\begin{lemma}
$\fd$ is a functor $\MELL\Mod \to \LL\Mod$.
\end{lemma}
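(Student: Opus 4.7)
The plan is to define $\fd$ on a morphism $(F,G)$ by setting $\fd(F,G) = (\fd(F), \fd(G))$, where both $\fd(F) : \fd_l(R) \to \fd_l(R')$ and $\fd(G) : \fd_i(S) \to \fd_i(S')$ act by postcomposition on the underlying family: an object $\cg^X_Y$ is sent to $(F\cg)^X_Y : \binom{x}{y} \mapsto F(\cg^x_y)$, and a linear morphism $(f,g,\alpha)$ is sent to $(f,g,F\alpha)$ where $(F\alpha)_{x,v} = F(\alpha_{x,v})$. In the Diller-Nahm case the component $\alpha_{x,v}$ has as its domain the finite product $\prod_{y \in g(x,v)} \cg^x_y$, so I would precompose $G(\alpha_{x,v})$ with the (invertible) comparison $\prod_{y \in g(x,v)} G(\cg^x_y) \to G(\prod_{y \in g(x,v)} \cg^x_y)$, which exists and is coherent because $G$ is a cartesian monoidal functor.

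First I would check that $\fd(F)$ and $\fd(G)$ are themselves functors: identities are sent to identities because $F$ and $G$ preserve identities and the bidding-set components of the identity are literally the identity functions; composition is preserved because the bidding and bind data in $\fd_l$- and $\fd_i$-composition do not mention $F$ or $G$, while $F$ and $G$ preserve composition of the underlying $R$- and $S$-morphisms. Next I would verify the three conditions of the morphism definition. Monoidality of $\fd(F)$ uses the identity on bidding sets, with comparison given objectwise by the monoidal comparison of $F$ applied to $F(\cg^x_{fu}) \otimes F(\ch^u_{gx}) \to F(\cg^x_{fu} \otimes \ch^u_{gx})$. Cartesian monoidality of $\fd(F)$ and $\fd(G)$ follows from inspection of the additives in figure \ref{connectives}, which are defined by disjoint union of bidding sets with a case split of the underlying family, both preserved by any functor that preserves products. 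Commutativity of the square for $M$ reduces objectwise to $M' \circ F = G \circ M$, which is condition 3 of the morphism of adjunctions. Commutativity of the square for $L$ reduces, after unwinding both sides, to a natural isomorphism between $F(\bigotimes_{y \in fx} L(\cg^x_y))$ and $\bigotimes_{y \in fx} L'(G(\cg^x_y))$; this combines the monoidal comparison of $F$ (to bring $F$ inside the fold) with $L' \circ G = F \circ L$ from condition 3.

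Finally I would verify that $\fd$ preserves identities and composition of morphisms of linear-nonlinear adjunctions. Since $\fd(F)$ and $\fd(G)$ are defined by straightforward postcomposition, both equations hold componentwise by the corresponding equations for $F, G, F', G'$. The main obstacle I expect is the coherence bookkeeping in the linearisation square: the comparison of $F$ with finite monoidal folds is in general only up to canonical natural isomorphism, so one is really working in a 2-category, and the squares of the morphism definition commute strictly only when $F$ is strict monoidal. However, since the lemma claims only 1-categorical functoriality and the parenthetical in the morphism definition already indicates that the weaker 2-categorical version with natural transformations is the intended one, it suffices to note that the required isomorphisms are canonical and inherit associativity and unit laws from the standard coherence of symmetric monoidal functors applied pointwise to each fibre $\cg^x_y$.
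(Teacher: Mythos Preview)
Your proposal is correct and follows essentially the same approach as the paper: define $\fd(F,G)=(\fd_f(F),\fd_f(G))$ by postcomposition on each fibre, then verify the morphism conditions, with the linearisation square reducing to $L'\circ G\cong F\circ L$ combined with the monoidal comparison of $F$ over the finite fold $\bigotimes_{y\in fx}$. The paper's own proof in fact only writes out the two exponential squares (explicitly noting that not all cases are included), so your sketch is, if anything, more complete; your closing remarks about coherence and the 2-categorical nature of the commutativity are also in line with the paper's caveats.
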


\begin{proof}
We need to prove that
\begin{center}
\begin{tikzpicture}
\node (A) {$\fd_i (S)$};
\node (B) [right=1cm of A] {$\bot$};
\node (C) [right=1cm of B] {$\fd_l (R)$};
\node (D) [below=2cm of A] {$\fd_i (S')$};
\node (E) [below=2cm of B] {$\bot$};
\node (F) [below=2cm of C] {$\fd_l (R')$};
\draw [->] (A) edge [bend left=45] node [below] {$\fd_{dn} (L)$} (C);
\draw [->] (C) edge [bend left=45] node [above] {$\fd_f (M)$} (A);
\draw [->] (D) edge [bend left=45] node [below] {$\fd_{dn} (L')$} (F);
\draw [->] (F) edge [bend left=45] node [above] {$\fd_l (M')$} (D);
\draw [->] (A) to node [left] {$\fd_f (G)$} (D);
\draw [->] (C) to node [right] {$\fd_f (F)$} (F);
\end{tikzpicture}
\end{center}
is a morphism of $\LL\Mod$, given that $(F, G)$ is a morphism of $\MELL\Mod$.

We will prove the conditions for exponentials, namely that we have commuting squares
\begin{center}
\begin{tikzpicture}[node distance=3cm, auto,
implies/.style={double distance=4pt,-implies}
]
\node (A) {$\fd_l (R)$};
\node (B) [right of=A] {$\fd_i (S)$};
\node (C) [right of=B] {$\fd_l (R)$};
\node (D) [below of=A] {$\fd_l (R')$};
\node (E) [below of=B] {$\fd_i (S')$};
\node (F) [below of=C] {$\fd_l (R')$};
\draw [->] (A) to node {$\fd_f (M)$} (B);
\draw [->] (B) to node {$\fd_{dn} (L)$} (C);
\draw [->] (A) to node {$\fd_f (F)$} (D);
\draw [->] (B) to node {$\fd_ f (G)$} (E);
\draw [->] (C) to node {$\fd_f (F)$} (F);
\draw [->] (D) to node {$\fd_f (M')$} (E);
\draw [->] (E) to node {$\fd_{dn} (L')$} (F);
\end{tikzpicture}
\end{center}
For the left hand square let $\cg^X_Y \in \fd_l (R)$. We have
\[ ((\fd_f (M') \circ \fd_f (F)) (\cg))^X_Y : \binom{x}{y} \mapsto (M' \circ F) (\cg^x_y) \]
\[ ((\fd_f (G) \circ \fd_f (M)) (\cg))^X_Y : \binom{x}{y} \mapsto (G \circ M) (\cg^x_y) \]
These are equivalent using the identity functions on $X$ and $Y$ and the natural isomorphism $M' \circ F \cong G \circ M$. For the right hand square let $\cg^X_Y \in \fd_i (S)$. Then we have
\[ ((\fd_{dn} (L') \circ \fd_f (G)) (\cg))^X_{X \to Y^*} : \binom{x}{f} \mapsto \bigotimes_{y \in fx} (L' \circ G) (\cg^x_y) \]
\[ ((\fd_f (F) \circ \fd_{dn} (L)) (\cg))^X_{X \to Y^*} : \binom{x}{f} \mapsto F \left( \bigotimes_{y \in fx} L (\cg^x_y) \right) \]
Using the identity functions on $X$ and $X \to Y^*$ together with the natural isomorphism $L' \circ G \cong F \circ L$ and the fact that $F$ is monoidal we have natural transformations
\[ \bigotimes_{y \in fx} (L' \circ G) (\cg^x_y) \cong \bigotimes_{y \in fx} (F \circ L) (\cg^x_y) \cong F \left( \bigotimes_{y \in fx} L (\cg^x_y) \right) \]
\end{proof}

%===========================================================
\section{$\fd$ is not a monad}\label{monad}
%===========================================================

We have defined $\fd$ as a functor $\MELL\Mod \to \LL\Mod$. By composing with the forgetful functor in the opposite direction we obtain an endofunctor on $\MELL\Mod$. In this section we will investigate a monad-like structure on $\fd$. The starting point is the observation that there is a family of functors $\mu_R : \fd_l^2 (R) \to \fd_l (R)$ which appears to be the multiplication of a monad. In this section we investigate this structure and show that, on the contrary, $\fd$ is not a monad. The functors $\mu_R$ behave badly with respect to exponentials, and the corresponding functors $\mu_S : \fd_i^2 (S) \to \fd_i (S)$ cannot be defined in a reasonable way. Even when restricting to just $\MLL$, the functors $\mu_R$ are only lax monoidal, and the second monad law fails to hold, even in a lax way.

The main theorem of \cite{hofstra11}, which gives a sense in which the dialectica interpretation is a pseudo-monad, is extremely closely related. There are two main differences, other than the fact that our dialectica categories are far less general. The first is that Hofstra's multiplication operator, from a game-semantic point of view, treats the two players asymmetrically, and so appears to be incompatible with classical linear logic. The second is that, by using linear-nonlinear semantics, we insist on soundness for linear logic with exponentials. Nevertheless the second monad law does not appear to rely on either of these facts, which implies that the constructions are more different than they appear.

This section is interesting for two reasons. Firstly by replacing the term `functor' with `proof translation' the fact that functional interpretations fail to be monads becomes a fact about proof theory, essentially that functional interpretations do not commute as much as possible with other proof translations. Secondly the `multiplication' operator $\mu_R$ is actually important in the study of the dialectica interpretation of additives, as explained in the next section. Fortunately, although some of the types in this section are formidable, the action of the $\mu$ operator on objects is simple and intuitive.

There are several parts to the construction (again, without considering 2-categorical aspects). Firstly we describe the unit $\eta_\R : \R \to \fd (\R)$, which is a map of models. Then we describe $\fd^2 (\R)$ explicitly, and explore the multiplication operation. The resulting setup is illustrated in figure \ref{monaddiag}. Finally we must consider the monad laws. In practice we will focus on the parts which are both interesting (in particular, the cases which fail), and are practical to write by hand.

\begin{figure}
\begin{center}
\begin{tikzpicture}
\node (A) {$S$};
\node (B) [right=1cm of A] {$\bot$};
\node (C) [right=1cm of B] {$R$};
\node (D) [below=2cm of A] {$\fd_i (S)$};
\node (E) [below=2cm of B] {$\bot$};
\node (F) [below=2cm of C] {$\fd_l (R)$};
\node (G) [below=2cm of D] {$\fd_i^2 (S)$};
\node (H) [right=6.5mm of G] {$\bot$};
\node (I) [below=2cm of F] {$\fd_l^2 (R)$};
\draw [->] (A) edge [bend left=45] node [below] {$L$} (C);
\draw [->] (C) edge [bend left=45] node [above] {$M$} (A);
\draw [->] (A) to node [left] {$\eta_S$} (D);
\draw [->] (C) to node [right] {$\eta_R$} (F);
\draw [->] (D) edge [bend left=45] node [below] {$\fd_{dn} (L)$} (F);
\draw [->] (F) edge [bend left=45] node [above] {$\fd_f (M)$} (D);
%\draw [->] (G) to node [left] {$\mu_S$} (D);
\draw [->] (I) to node [right] {$\mu_R$} (F);
\draw [->] (G) edge [bend left=45] node [below] {$\fd_{dn}^2 (L)$} (I);
\draw [->] (I) edge [bend left=45] node [above] {$\fd_f^2 (M)$} (G);
\end{tikzpicture}
\end{center}
\caption{Unit and multiplication of $\fd$}
\label{monaddiag}
\end{figure}

We will begin with the natural transformation $\eta : I \to \fd$, where $I$ is the identity functor on $\MELL\Mod$. The functor
\[ \eta_R : R \to \fd_l (R) \]
takes an object $x \in R$ to the game with one play and outcome $x$,
\[ (\eta_R (x))^{\{ * \}}_{\{ * \}} : \binom{*}{*} \mapsto x \]
(recall that this is precisely the valuation of atoms in section \ref{complete}). It takes a morphism $\pi : \hom_R (x, y)$ to the strategy $(\id, \id, \pi)$ where $\id$ is the identity function on $\{ * \}$. The functor
\[ \eta_S : S \to \fd_i (S) \]
is similar. To be clear about notation, the components of $\eta$ are $\eta_\R$, where $\eta_\R$ is a lax morphism of models consisting of the functors $(\eta_R, \eta_S)$.

\begin{lemma}
$\eta$ is a well-defined natural transformation $\mathbf I \to \fd$.
\end{lemma}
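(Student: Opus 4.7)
The plan is to verify in sequence three things: (a) functoriality of each of $\eta_R$ and $\eta_S$; (b) that $\eta_\R = (\eta_R, \eta_S)$ assembles into a lax morphism of linear-nonlinear adjunctions for each fixed $\R$; (c) naturality in $\R$ with respect to morphisms $(F,G)$ of $\MELL\Mod$. Functoriality is immediate from the formula $\pi \mapsto (\id,\id,\pi)$: composing two such triples in $\fd_l(R)$ reduces, on the underlying $R$-component, to composition in $R$, and the identity on $x$ maps to the identity on $\eta_R(x)$. The same holds for $\eta_S$, since a morphism in $\fd_i(S)$ with constantly-singleton multiset component restricts to a single $S$-morphism.

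For (b), I would extract the monoidal and cartesian structures from Figure \ref{connectives}. Evaluating $\eta_R(x)\otimes \eta_R(y)$ gives a double-indexed family with bid sets $\{*\}\times\{*\}$ and $(\{*\}\to\{*\})\times(\{*\}\to\{*\})$, both canonically singletons, whose unique entry is $x\otimes y$; hence a canonical isomorphism $\eta_R(x)\otimes\eta_R(y)\cong \eta_R(x\otimes y)$, and likewise $1_{\fd_l(R)}\cong \eta_R(1)$. The analogous check for $\eta_S$ uses the fact that the Diller-Nahm product of two $\eta$-objects has its bid sets collapse to singletons, leaving the cartesian product in $S$. The square involving $M$ commutes on the nose: $(\fd_f(M)\circ\eta_R)(x)$ and $(\eta_S\circ M)(x)$ are both $\binom{*}{*}\mapsto M(x)$ with singleton bid sets, witnessed by identities.

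The only non-strict piece is the square involving $L$, and this is where the word \emph{lax} in the statement of the lemma is actually needed. Compute
\[ (\fd_{dn}(L)\circ \eta_S)(x)^{\{*\}}_{\{*\}\to\{*\}^*} : \binom{*}{f}\mapsto \bigotimes_{y\in f(*)} L(x), \]
whereas $(\eta_R\circ L)(x)^{\{*\}}_{\{*\}} : \binom{*}{*}\mapsto L(x)$. A morphism $\fd_{dn}(L)\circ \eta_S\Rightarrow \eta_R\circ L$ in $\fd_l(R)$ is given by the identity function on witnesses, the counter-witness map $*\mapsto (*\mapsto \{*\})$ selecting the singleton multiset, and the identity $L(x)^{\otimes 1}\to L(x)$ at the underlying level; this is the canonical lax 2-cell and it is natural in $x$ by functoriality of $L$ and of the monoidal product. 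This is the main step of the proof and the only obstacle worth naming: one has to notice that no strict commutation is possible (because of the fold over $f(*)$) and pick the correct direction of the 2-cell so as to land in the category of lax morphisms of models referenced in the definition.

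For (c), given $(F,G):\R\to\R'$ in $\MELL\Mod$, the squares
\[ \fd_f(F)\circ \eta_R = \eta_{R'}\circ F, \qquad \fd_f(G)\circ \eta_S = \eta_{S'}\circ G \]
hold strictly: both sides compute to $\binom{*}{*}\mapsto F(x)$ and $\binom{*}{*}\mapsto G(y)$ respectively, with trivial bid sets and identity components. The required compatibility between these equalities and the lax 2-cells from (b) is a direct diagram chase, using monoidality of $F$ to commute $F$ past the monoidal fold $\bigotimes_{y\in f(*)}$ and the natural isomorphism $L'\circ G\cong F\circ L$ from the hypothesis that $(F,G)$ is itself a morphism of models. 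No new ideas are required beyond the ones used in the proof that $\fd$ is a functor.
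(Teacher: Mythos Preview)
The paper does not actually supply a proof of this lemma: it is stated and then the text moves directly to the description of $\fd^2(\R)$. So there is nothing to compare against at the level of argument. Your outline is correct and is exactly the kind of verification the paper's remark ``This section and the next do not contain all cases of the proofs'' is deferring.

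Your most substantive observation --- that the square $\fd_{dn}(L)\circ\eta_S \Rightarrow \eta_R\circ L$ only commutes laxly because the counter-witness set $\{*\}\to\{*\}^*$ is genuinely larger than $\{*\}$ --- is the one point that deserves emphasis, and you have it right, including the direction of the 2-cell. This means $\eta_\R$ is not a morphism in $\MELL\Mod$ as the paper defines it (with strictly commuting squares), but only a map of models in the weakened sense the paper mentions parenthetically in the definition and foreshadows at the start of Section~\ref{functor}. The paper is a little casual about which category of models $\eta$ actually lives in; your proof makes this explicit, which is an improvement. The remaining checks (functoriality, strong monoidality of $\eta_R$, strict commutation of the $M$ square, and naturality in $(F,G)$) are routine and your sketches are accurate.
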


Next we explicitly find $\fd^2 (\mathcal R)$ as a model of $\MELL$. An object $\mathcal G^X_Y$ of $\fd_l^2 (R)$ consists of sets $X$ and $Y$ together with a family of objects $\mathcal G^x_y$ of $\fd_l (R)$. Each such $\mathcal G^x_y$ itself has the form $(\mathcal G^x_y)^{U^x_y}_{V^x_y}$, where $U^x_y$ and $V^x_y$ are families of sets dependent on $x$ and $y$, and we have a family of objects $(\mathcal G^x_y)^u_v$ of $R$. This defines the objects of both categories $\fd_l^2 (R)$ and $\fd_i^2 (S)$.

Consider objects $\mathcal G^X_Y$ and $\mathcal H^W_Z$ of $\fd_l^2 (R)$ given by $(\mathcal G^x_y)^{U^x_y}_{V^x_y}$ and $(\mathcal H^w_z)^{P^w_z}_{Q^w_z}$, and consider a morphism from $\mathcal G$ to $\mathcal H$. This consists of functions $f : X \to W$ and $g : Z \to Y$ together with morphisms from $\mathcal G^x_{gz}$ to $\mathcal H^{fx}_z$ in $\fd_l (R)$. Each such morphism itself consists of functions $\alpha : U^x_{gz} \to P^{fx}_z$ and $\beta : Q^{fx}_z \to V^x_{gz}$ together with morphisms in $R$. Thus we have
\[ \hom_{\fd_l^2 (R)} (\mathcal G, \mathcal H) = \sum_{\substack{f : X \to W \\ g : Z \to Y}} \prod_{\substack{x : X \\ z : Z}} \sum_{\substack{\alpha : U^x_{gz} \to P^{fx}_z \\ \beta : Q^{fx}_z \to V^x_{gz}}} \prod_{\substack{u : U^x_{gz} \\ q : Q^{fx}_z}} \hom_R \left( (\mathcal G^x_{gz})^u_{\beta q}, (\mathcal H^{fx}_z)^{\alpha u}_q \right) \]

Morphisms in $\fd_i^2 (S)$ are much more complicated and will not be considered here. In order to complete the picture we would also need to consider $\fd_f^2 (M)$ and $\fd_{dn}^2 (L)$, but we will not do so here.

By thinking of $\fd^2 (R)$ as a game model the definition of $\mu_R$ becomes obvious. We begin with a game model $R$ of $\MLL$, and prepend a bidding round to obtain $\fd_l (R)$, then prepend an earlier bidding round to obtain $\fd_l^2 (R)$. A strategy for a game in this model consists of a bid in the first bidding round, together with a bid in the second bidding round for each possible bid of the opponent, and finally a strategy for each resulting game. This can be converted into a game with a single bidding round by bidding dependent types. Formally, given $\mathcal G^X_Y$ in $\fd_l^2 (R)$ given by $(\mathcal G^x_y)^{U^x_v}_{V^x_y}$, we define the object $\mu_R (\mathcal G)$ of $\fd_l (R)$ by
\[ \left( \mu_R (\mathcal G) \right)^{\sum_{x : X} \prod_{y : Y} U^x_y}_{\sum_{y : Y} \prod_{x : X} V^x_y} : \binom{x,f}{y,g} \mapsto (\mathcal G^x_y)^{fy}_{gx} \]

We will begin by showing that $\mu_R$ is lax monoidal but not strong monoidal. Suppose we have games $\cg^X_Y, \ch^W_Z \in \fd_l^2 (R)$ given by
\[ (\cg^x_y)^{U^x_y}_{V^x_y} : \binom{u}{v} \mapsto (\cg^x_y)^u_v \]
and
\[ (\ch^w_z)^{P^w_z}_{Q^w_z} : \binom{p}{q} \mapsto (\ch^w_z)^p_q \]
We need to construct a relative winning strategy
\[ \mu_R \cg \otimes \mu_R \ch \multimap \mu_R (\cg \otimes \ch) \]
We have
\[ (\mu_R \cg \otimes \mu_R \ch)^{\sum_{x : X} \sum_{x : X} \prod_{y : Y} U^x_y \times \sum_{w : W} \prod_{z : Z} P^w_z}_{\left( \sum_{w : W} \prod_{z : Z} P^w_z \to \sum_{y : Y} \prod_{x : X} V^x_y \right) \times \left( \sum_{x : X} \prod_{y : Y} U^x_y \to \sum_{z : Z} \prod_{w : W} Q^w_z \right)} \]
and
\[ (\mu_R (\cg \otimes \ch))^{\sum_{(x,w) : X \times W} \prod_{(f,g) : (W \to Y) \times (X \to Z)} (U^x_{fw} \times P^u_{gx})}_{\sum_{(f,g) : (W \to Y) \times (X \to Z)} \prod_{(x,w) : X \times W} ((P^u_{gx} \to V^x_{fw}) \times (U^x_{fw} \to Q^u_{gx}))} \]
To define a function
\[ \Phi : \sum_{x : X} \prod_{y : Y} U^x_y \times \sum_{w : W} \prod_{z : Z} P^w_z \to \sum_{(x, w) : X \times W} \prod_{\substack{(f, g) :\\ (W \to Y) \times (X \to Z)}} (U^x_{fw} \times P^u_{gx}) \]
suppose we are given $((x, \alpha), (w, \beta))$ where $\alpha : (y : Y) \to U^x_y$ and $\beta : (z : Z) \to P^w_z$. We need to define
\[ F : ((f, g) : (W \to Y) \times (X \to Z)) \to U^x_{fw} \times P^u_{gx} \]
which can be given by
\[ F (f, g) = (\alpha (fw), \beta (gx)) \]
In the other direction we need to define a function
\[ \Psi : \sum_{\substack{(f,g) :\\ (W \to Y) \times (X \to Z)}} \prod_{(x,w) : X \times W} ((P^u_{gx} \to V^x_{fw}) \times (U^x_{fw} \times Q^u_{gx})) \to \]
\[ \left( \sum_{w : W} \prod_{z : Z} P^w_z \to \sum_{y : Y} \prod_{x : X} V^x_y \right) \times \left( \sum_{x : X} \prod_{y : Y} U^x_y \to \sum_{z : Z} \prod_{w : W} Q^w_z \right) \]
Consider the left projection of this function (the right projection is symmetric). As input we are given the data
\begin{align*}
f &: W \to Y \\
g &: X \to Z \\
F &: ((x, w) : X \times W) \to ((P^w_{fx} \to V^x_{fw}) \times (U^x_{fw} \times Q^u_{gx})) \\
w &: W \\
h &: (z : Z) \to P^w_z
\end{align*}
We must produce $y : Y$ and $h' : (x : X) \to V^x_y$. We take $y = fw$ and
\[ h'x = \pi_L (F (x, w)) (h (gx)) \]
Note that neither of $\Phi$ and $\Psi$ can be canonically reversed, so $\mu_R$ is not strong monoidal.

Now, however, we consider the pair of games $\oc \mu_R (\cg)$ and $\mu_R (\oc \cg)$. The former is
\[ (\oc \mu_R (\cg))^{\sum_{x:X} \prod_{y:Y} U^x_y}_{\sum_{x:X} \prod_{y:Y} U^x_y \to \left( \sum_{y:Y} \prod_{x:X} V^x_y \right)^*} : \binom{x,f}{F} \mapsto \bigotimes_{(y,g) \in F(x,f)} \oc (\cg^x_y)^{fy}_{gx} \]
The latter, which takes some work to calculate, is
\[ (\mu_R (\oc \cg))^{\sum_{x:X} \prod_{f : X \to Y^*} \prod_{y \in fx} U^x_y}_{\sum_{f : X \to Y^*} \prod_{x:X} \left( \prod_{y \in fx} U^x_y \to \prod_{y \in fx} (V^x_y)^* \right)} : \binom{x,F}{f,G} \mapsto \bigotimes_{y \in fx} \bigotimes_{v \in Gx(Ff)y} \oc (\cg^x_y)^{Ffy}_v \]
The counter-witness types of these are incomparable, in the sense that there is no function in either direction which is natural in the types. Therefore we can say that $\mu : \fd_l^2 \to \fd_l$ is a well-defined natural transformation on the category of models of $\MLL$ and lax morphisms, but does not extend to $\MELL$.

The linear-nonlinear semantics gives us a better perspective on this problem. We can think of objects of $\fd_i (S)$ as games with bidding, but in which in the bidding round Abelard has the advantages granted by the exponential, namely he can observe Eloise's move and then choose several possible moves. In particular, the sequentiality of the bidding prevents us from extending our intuition about $\mu_R$ to $\fd_i^2 (S)$. A compound game in $\fd_i^2 (S)$ has two bidding rounds which are each played sequentially, and so bids are made in the order $\exists \forall \exists \forall$. We cannot reduce this to a single round of dependent bidding, because there is no way to specify that Abelard's first bid cannot depend on Eloise's second bid.

Restricting to $\MLL$, the first monad law holds up to natural isomorphism.

\begin{theorem}
There are natural isomorphisms
\center{\begin{tikzpicture}[node distance=3cm, auto]
\node (A) {$\fd_l (R)$};
\node (B) [right of=A] {$\fd_l^2 (R)$};
\node (C) [below of=A] {$\fd_l^2 (R)$};
\node (D) [below of=B] {$\fd_l (R)$};
\draw [->] (A) to node {$\eta_{\fd_l (R)}$} (B);
\draw [->] (B) to node {$\mu_R$} (D);
\draw [->] (A) to node {$\fd_f (\eta_R)$} (C);
\draw [->] (C) to node {$\mu_R$} (D);
\draw [double equal sign distance] (A) to node {} (D);
\end{tikzpicture}}
\end{theorem}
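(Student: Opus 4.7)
The plan is to compute both composites $\mu_R \circ \eta_{\fd_l (R)}$ and $\mu_R \circ \fd_f (\eta_R)$ explicitly on an arbitrary object $\cg^X_Y \in \fd_l (R)$ and observe that in each case we recover $\cg^X_Y$ up to the canonical singleton isomorphisms. The key identities are the type-theoretic unit laws for singleton index sets, namely $\sum_{* : \{*\}} A \cong A$ and $\prod_{* : \{*\}} A \cong A$, which are natural in $A$.

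First I would unfold $\eta_{\fd_l (R)} (\cg^X_Y)$, which is the object of $\fd_l^2 (R)$ with outer bid sets $\{*\}$ and $\{*\}$ and inner datum at $(*,*)$ equal to $\cg^X_Y$. Applying $\mu_R$ then yields witness set $\sum_{* : \{*\}} \prod_{* : \{*\}} X$ and counter-witness set $\sum_{* : \{*\}} \prod_{* : \{*\}} Y$, both canonically isomorphic to $X$ and $Y$ respectively, with game at the transported index $(x,y)$ equal to $\cg^x_y$. Next I would unfold $\fd_f (\eta_R) (\cg^X_Y)$, which has outer bid sets $X$ and $Y$ and inner datum at $(x,y)$ given by $\eta_R (\cg^x_y)$, i.e.\ singleton inner bids $U^x_y = V^x_y = \{*\}$ with underlying object $\cg^x_y$. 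Applying $\mu_R$ then yields witness set $\sum_{x : X} \prod_{y : Y} \{*\} \cong X$ and counter-witness set $\sum_{y : Y} \prod_{x : X} \{*\} \cong Y$, with game at $(x,y)$ equal to $(\cg^x_y)^{f(y)}_{g(x)} = \cg^x_y$ since the inner bid sets are singletons.

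In both cases the candidate isomorphism to $\id_{\fd_l (R)} (\cg)$ is built from identity morphisms in $R$ together with the canonical projections between sums/products over singletons and the underlying sets. I would then check naturality: given a morphism $(f, g, \alpha) : \cg^X_Y \to \ch^U_V$ in $\fd_l (R)$, I would trace it through the two singleton layers added by $\eta_{\fd_l (R)}$ (respectively $\fd_f (\eta_R)$) and through $\mu_R$, observing that the outer/inner functions conjugate to $f$ and $g$ along the singleton isomorphisms and that the underlying $R$-morphisms are unchanged. Finally I would verify that these isomorphisms are a morphism of models in $\MLL\Mod$, which amounts to checking that they are monoidal natural isomorphisms with respect to the tensor and negation defined in figure \ref{connectives}; this is routine because all constructions commute pointwise with the singleton-indexing manipulations.

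The main obstacle is notational rather than mathematical: one must be very careful to distinguish the two layers of indexing in $\fd_l^2 (R)$ and to track how the singleton isomorphisms interact with the dependent sums and products in the definition of $\mu_R$. This is exactly the kind of bookkeeping that would benefit from formalisation in a dependently typed proof assistant, as remarked elsewhere in the paper; by hand, the computation is essentially a chain of applications of $\sum_{* : \{*\}} A \cong A \cong \prod_{* : \{*\}} A$ and is straightforward once the two composites have been written out explicitly as above.
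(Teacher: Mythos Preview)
Your proposal is correct and follows essentially the same approach as the paper: compute both composites explicitly on an arbitrary object, obtain index sets of the form $\sum_{*}\prod_{*} X$ and $\sum_{x}\prod_{y}\{*\}$ (and their duals), and reduce to $\cg^X_Y$ via the canonical singleton isomorphisms. The paper's proof is just a two-line version of your computation; your additional remarks on naturality and monoidality are elaborations the paper leaves implicit, not a different route.
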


\begin{proof}
Consider an object $\cg^X_Y$ of $\fd_l (R)$. We can directly compute:
\[ ((\mu_R \circ \eta_{\fd_l (R)})(\cg))^{\sum_* \prod_* X}_{\sum_* \prod_* Y} : \binom{*,f}{*,g} \mapsto \cg^{f*}_{g*} \]
and
\[ ((\mu_R \circ \fd_f (\eta_R))(\cg))^{\sum_x \prod_y \{ * \}}_{\sum_y \prod_x \{* \}} : \binom{x, f}{y,g} \mapsto \cg^x_y \]
These are both naturally isomorphic to $\cg^X_Y$.
\end{proof}

The second monad law
\begin{center}
\begin{tikzpicture}[node distance=3cm, auto]
\node (A) {$\fd_l^3 (R)$};
\node (B) [right of=A] {$\fd_l^2 (R)$};
\node (C) [below of=A] {$\fd_l^2 (R)$};
\node (D) [below of=B] {$\fd_l (R)$};
\draw [->] (A) to node {$\mu_{\fd_l (R)}$} (B);
\draw [->] (B) to node {$\mu_R$} (D);
\draw [->] (A) to node {$\fd_f (\mu_R)$} (C);
\draw [->] (C) to node {$\mu_R$} (D);
\end{tikzpicture}
\end{center}
fails, even in a lax way (that is, this diagram does not contain a 2-cell). Consider an object of $\fd_l^3 (R)$ given by
\[ \cg : \binom{x : X}{y : Y} \mapsto \binom{u : U^x_y}{v : V^x_y} \mapsto \binom{p : (P^x_y)^u_v}{q : (Q^x_y)^u_v} \mapsto ((\cg^x_y)^u_v)^p_q \]
We can directly compute
\[ (\fd_f (\mu_R)) (\cg) : \binom{x : X}{y : Y} \mapsto \binom{u, \alpha : \sum_{u : U} \prod_{v : V} (P^x_y)^u_v}{v, \beta : \sum_{v : V} \prod_{u : U} (Q^x_y)^u_v} \mapsto ((\cg^x_y)^u_v)^{\alpha u}_{\beta v} \]
Therefore
\[ (\mu_R \circ \fd_f (\mu_R)) (\cg)^{\sum_{x : X} \prod_{y : Y} \sum_{u : U} \prod_{v : V} (P^x_y)^u_v}_{\sum_{y : Y} \prod_{x : X} \sum_{v : V} \prod_{u : U} (Q^x_y)^u_v} : \binom{x, f, \alpha}{y, g, \beta} \mapsto ((\cg^x_y)^{fy}_{gx})^{\alpha y (gx)}_{\beta x (fy)} \]
We also get
\[ \mu_{\fd_l (R)} (\cg) : \binom{x, f : \sum_{x : X} \prod_{y : Y} U^x_y}{y, g : \sum_{y : Y} \prod_{x : X} V^x_y} \mapsto \binom{p : (P^x_y)^{fy}_{gx}}{q : (Q^x_y)^{fy}_{gx}} \mapsto ((\cg^x_y)^{fy}_{gx})^p_q \]
Then $(\mu_R \circ \mu_{\fd_l (R)}) (\cg)$ involves nested dependent types:
\[ (\mu_R \circ \mu_{\fd (R)}) (\cg)^{\sum_{(x, f : \sum_{x : X} \prod_{y : Y} U^x_y)} \prod_{(y, g : \sum_{y : Y} \prod_{x : X} V^x_y)} (P^x_y)^{fy}_{gx}}_{\sum_{(y, g : \sum_{y : Y} \prod_{x : X} V^x_y)} \prod_{(x, f : \sum_{x : X} \prod_{y : Y} U^x_y)} (Q^x_y)^{fy}_{gx}} : \binom{x, f, F}{y, g, G} \mapsto ((\cg^x_y)^{fy}_{gx})^{F(y,g)}_{G(x,f)} \]
There is a natural transformation
\[ \sum_{x : X} \prod_{y : Y} \sum_{u : U} \prod_{v : V} (P^x_y)^u_v \to \sum_{(x, f : \sum_{x : X} \prod_{y : Y} U^x_y)} \prod_{(y, g : \sum_{y : Y} \prod_{x : X} V^x_y)} (P^x_y)^{fy}_{gx} \]
defined by
\[ (x, f, \alpha) \mapsto (x, f, \lambda (y, g) . \alpha y (gx)) \]
However there is none in the opposite direction. Similarly there is a natural transformation
\[ \sum_{y : Y} \prod_{x : X} \sum_{v : V} \prod_{u : U} (Q^x_y)^u_v \to \sum_{(y, g : \sum_{y : Y} \prod_{x : X} V^x_y)} \prod_{(x, f : \sum_{x : X} \prod_{y : Y} U^x_y)} (Q^x_y)^{fy}_{gx} \]
but none in the opposite direction. As a result, there is no morphism of $\fd_l (R)$ in either direction between the games $(\mu_R \circ \fd_f (\mu_R))(\cg)$ and $(\mu_R \circ \mu_{\fd_l (R)}) (\cg)$.

%===========================================================
\section{Towards the additives}\label{additives}
%===========================================================

In this section we briefly look at the question of how the completeness result in section \ref{complete} should be extended to full $\LL$. The intuition is that we are trying to simulate the behaviour of the simultaneous quantifier in \cite{oliva08}, in order to find a better analogue to the characterisation theorem $\varphi \multimapboth \aeq^x_y \dial{\varphi}{x}{y}$. This is ongoing work by the author, and this section only outlines the method.

We extend the language of $\mathbf{MELL}$ as follows. For a double-indexed family of formulas $\dial{\varphi}{X}{Y}$ we freely add a formula called $\binom{\oplus x : X}{\with y : Y} \dial{\varphi}{x}{y}$. These new formulas are called simultaneous additives (they could also be called `Henkin additives', because simultaneous quantifiers are a special case of Henkin quantifiers). The definition is fully recursive, so the individual formulas $\dial{\varphi}{x}{y}$ may themselves be simultaneous additives. 

There is a single introduction rule for simultaneous additives. Suppose we have double-indexed families of formulas $\dial{\varphi_i}{X_i}{Y_u}$ for $1 \leq i \leq m$ and $\dial{\psi_j}{U_j}{V_j}$ for $1 \leq j \leq n$. For all functions
\begin{align*}
f_j &: \prod_{i'} X_{i'} \times \prod_{j' \neq j} V_{j'} \to U_j \\
g_i &: \prod_{i' \neq i} X_{i'} \times \prod_{j'} V_{j'} \to Y_i
\end{align*}
for $1 \leq i \leq m$ and $1 \leq j \leq n$ we have a proof rule
\begin{prooftree}
\AxiomC{$\Gamma, \left( \dial{\varphi_i}{x_i}{g_i (\vec x_{-i}, \vec v)} \right)_{i=1}^m \vdash \Delta, \left( \dial{\psi_j}{f_j (\vec x, \vec v_{-j})}{v_j} \right)_{j=1}^n$ for all $\vec x \in \prod_i X_i, \vec v \in \prod_j V_j$}
\UnaryInfC{$\Gamma, \left( \binom{\oplus x_i : X_i}{\with y_i : Y_i} \dial{\varphi_i}{x_i}{y_i} \right)_{i=1}^m \vdash \Delta, \left( \binom{\oplus u_j : U_j}{\with v_j : V_j} \dial{\psi_j}{u_j}{v_j} \right)_{j=1}^n$}
\end{prooftree}
There is a hypothesis for all tuples $\vec x, \vec v$, hence this rule is generally infinitary. (The proof rule in \cite{oliva08} on which this is based uses free variables for $\vec{x}$ and $\vec{v}$ instead; it might be necessary to impose a restriction that the subproofs are `uniform' in the parameters in some way.) The extended language will be called $\fd\LL$.

We extend the valuation of formulas in a model $\mathfrak D (\mathcal R)$ to include simultaneous additives. If each $\dial{\varphi}{x}{y}$ is a formula in the language of $\mathfrak D \mathbf{LL}$ with interpretation $\dial{\dial{\varphi}{x}{y}}{U^x_y}{V^x_y}$ then the interpretation of $\binom{\oplus x : X}{\with y : Y} \dial{\varphi}{x}{y}$ is given precisely by the $\mu$ operator:
\[ \dial{\binom{\oplus x : X}{\with y : Y} \dial{\varphi}{x}{y}}{\sum_{x : X} \prod_{y : Y} U^x_y}{\sum_{y : Y} \prod_{x : X} V^x_y} : \binom{x,f}{y,g} \mapsto \dial{\dial{\varphi}{x}{y}}{fy}{gx} \]
It is an open question what should be the semantics of simultaneous additives in an arbitrary category. If it exists, it must have properties of both a limit and a colimit, since it includes products and coproducts as special cases.

\begin{theorem}
Let $R$ be any category, then $\fd_l (R)$ validates the simultaneous additive introduction rule.
\end{theorem}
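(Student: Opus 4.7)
The plan is to unfold the interpretations of the conclusion sequent in $\fd_l(R)$ and observe that the hypothesis data canonically assemble into the required morphism. By the $\mu_R$-clause, each antecedent formula $\binom{\oplus x_i : X_i}{\with y_i : Y_i} \dial{\varphi_i}{x_i}{y_i}$ has outer witness type obtained by applying $\sum_{x_i : X_i} \prod_{y_i : Y_i} (-)$ to the outer witness type of $\dial{\varphi_i}{x_i}{y_i}$ in $\fd_l(R)$, and outer counter-witness type obtained by the dual $\sum_{y_i : Y_i} \prod_{x_i : X_i} (-)$; analogously for each succedent formula. Recall that an outer witness of a tensor $\bigotimes_i$ is a tuple of outer witnesses and an outer counter-witness of a par $\bigparr_j$ is a tuple of outer counter-witnesses (the remaining components sit inside function types), so the outer data on the two sides always contain tuples $\vec{x} = (x_i)_i$ and $\vec{v} = (v_j)_j$ that can simply be projected out.

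Next, a $\fd_l(R)$-morphism between these interpretations must specify, beyond the contribution from $\Gamma$ and $\Delta$: an outer witness function producing, for each $j$, a pair $(u_j, \beta_j)$ (with $\beta_j$ a choice function in $v_j$) from the antecedent-side witnesses together with the remaining succedent-side counter-witnesses; a dual outer counter-witness function producing each $(y_i, \gamma_i)$; and an inner $R$-morphism at every configuration. I set $u_j := f_j(\vec{x}, \vec{v}_{-j})$ and $y_i := g_i(\vec{x}_{-i}, \vec{v})$, which is type-correct thanks to the Henkin-style side conditions (no $v_j$ inside $f_j$, no $x_i$ inside $g_i$). With the full pair $(\vec{x}, \vec{v})$ now in scope I instantiate the hypothesis morphism at $(\vec{x}, \vec{v})$: its outer witness function, fed with $\alpha_i(y_i)$, returns the required $\beta_j(v_j)$; its outer counter-witness function, fed with $\delta_j(u_j)$, returns $\gamma_i(x_i)$; and its inner $R$-morphism is reused verbatim, also carrying the $\Gamma$ and $\Delta$ components.

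The main obstacle is the bookkeeping of dependent-type signatures: one must check that the outer functions just defined land in the correct $\sum$-types and that the $\Gamma, \Delta$ slots thread through uniformly in $(\vec{x}, \vec{v})$. This is of the same flavour as the calculations in section \ref{monad}, and is again where the use of dependent type theory pays off. Conceptually however the argument is forced, because $\mu_R$ was defined precisely so that its outer $\sum_x \prod_y / \sum_y \prod_x$ structure mirrors the outer bidding structure of $\fd_l(R)$, and the Henkin-style non-circularity conditions in the rule are exactly what is needed for $u_j$ and $y_i$ to be definable from $\vec{x}$ and $\vec{v}$ alone without circular dependencies.
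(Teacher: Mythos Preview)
The paper does not actually supply a proof of this theorem: it is stated in section~\ref{additives} and the text immediately moves on to the discussion of characterising principles. So there is nothing to compare your proposal against on the paper's side.

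Your argument is the natural one and is essentially correct. The key observation, which you make explicitly, is that the Henkin-style side conditions on $f_j$ and $g_i$ are exactly what allow the outer components $u_j$ and $y_i$ of the conclusion morphism to be defined from the projected tuples $\vec{x}, \vec{v}$ without circularity, and that once $(\vec{x}, \vec{v})$ is fixed the hypothesis morphism at that index supplies both the inner witness/counter-witness functions (your $\beta_j, \gamma_i$) and the $R$-morphism. In the one-on-each-side case this unwinds cleanly: with hypothesis data $(F'_{x,v}, G'_{x,v}, \pi_{x,v})$ one sets $F(x,\alpha) = (f(x), \lambda v.\, F'_{x,v}(\alpha(g v)))$ and $G(v,\delta) = (g(v), \lambda x.\, G'_{x,v}(\delta(f x)))$, and the required $R$-morphism at $((x,\alpha),(v,\delta))$ is precisely $\pi_{x,v}$ instantiated at $\alpha(g v)$ and $\delta(f x)$. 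Your sketch correctly identifies that the multi-formula and $\Gamma,\Delta$ cases are the same idea buried under the function-type structure of iterated $\otimes$ and $\parr$ in $\fd_l(R)$; that bookkeeping is genuine but routine, in the spirit of the calculations in section~\ref{monad}.
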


If we try to prove the equivalence of $\varphi$ and $\binom{\oplus x : X}{\with y : Y} \dial{\varphi}{x}{y}$, where $\varphi$ is a formula of $\LL$, we find that we need some additional principles beyond $\fd\LL$, corresponding to the characterising principles of a functional interpretation. Two of these are
\[ \binom{\oplus x : X, u : U}{\with f : Y^U, g : V^X} \dial{\varphi \otimes \psi}{x,u}{f,g} \multimap \binom{\oplus x : X}{\with y : Y} \dial{\varphi}{x}{y} \otimes \binom{\oplus u : U}{\with v : V} \dial{\psi}{u}{v} \]
and
\[ \binom{\oplus z : X + U}{\with y : Y, v : V} \dial{\varphi \oplus \psi}{z}{y,v} \multimap \binom{\oplus x : X}{\with y : Y} \dial{\varphi}{x}{y} \oplus \binom{\oplus u : U}{\with v : V} \dial{\psi}{u}{v} \]
The first is a propositional analogue of the \emph{parallel choice} principle in \cite{oliva10}, which itself is a generalisation of the independence of premise principle. Write $\fd\LL^\#$ for $\fd\LL$ extended with these axioms and others for the exponential. Then, by directly simulating the characterisation theorem for a functional interpretation it should be possible to prove that if $\R$ is sound and complete for $\MELL$ then $\fd (\R)$ is sound and complete for $\fd\LL^\#$. In particular, $\fd (\mathbf{AG})$ should be a sound and complete model of $\fd\LL^\#$.

We continue by a purely syntactic argument. We prove that $\fd\LL$ has full cut elimination, and is a conservative extension of $\LL$ by identifying the usual additives with suitable simultaneous additives. Now if we take a formula $\varphi$ in the language of $\LL$ which is validated by $\fd (\mathbf{AG})$, we know that $\varphi$ is derivable in $\fd\LL^\#$, with a proof potentially involving both cuts and the characterising principles. In particular, since $\varphi$ does not contain simultaneous additives, any simultaneous additives introduced in the proof by a characterising principle must be removed by a cut. By analysing the ways in which cut elimination can fail in the presence of characterising principles, it should be possible to identity axioms in the language of $\LL$ which are sound and complete for $\fd (\mathbf{AG})$.

As an example, consider the formula $\bot \otimes \top$. This is not provable in $\LL$, because $\varphi \otimes \psi$ is provable in $\LL$ iff $\varphi$ and $\psi$ are both provable (by cut elimination) and $\bot$ is not provable. However every dialectica model has $\bot \otimes \top \cong \top$, and in particular $\bot \otimes \top$ is validated. Moreover for this simple example we can generalise from the category of sets to any cartesian closed category, which is the minimum structure needed to prove soundness. Therefore we can say that there is no dialectica category which is both sound and complete for classical linear logic. 

On the positive side, dialectica categories are more often considered as models of \emph{intuitionistic} linear logic, in which both witness and counter-witness sets must be nonempty. Since this example does not apply in that setting, there is still a possibility that we can construct complete dialectica models of intuitionistic linear logic by this method.

\bibliographystyle{plain}
%\bibliography{/Users/jules/Dropbox/Work/refs}
\bibliography{refs}

\begin{thebibliography}{10}

\bibitem{avigad98}
Jeremy Avigad and Solomon Feferman.
\newblock G\"odel's functional ("{D}ialectica") interpretation.
\newblock In S.~Buss, editor, {\em Handbook of proof theory}, volume 137 of
  {\em Studies in logic and the foundations of mathematics}, pages 337--405.
  North Holland, Amsterdam, 1998.

\bibitem{barr91}
Michael Barr.
\newblock *-autonomous categories and linear logic.
\newblock {\em Mathematical structures in computer science}, 1(2):159--178,
  1991.

\bibitem{benton94}
P.~N. Benton.
\newblock A mixed linear and non-linear logic: proofs, terms and models
  (preliminary report).
\newblock Technical report, University of Cambridge, 1994.

\bibitem{blass91}
Andreas Blass.
\newblock A game semantics for linear logic.
\newblock {\em Annals of pure and applied logic}, 1991.

\bibitem{depaiva91b}
Valeria de~Paiva.
\newblock Categorical multirelations, linear logic and petri nets.
\newblock Technical report, University of Cambridge, 1991.

\bibitem{depaiva91a}
Valeria de~Paiva.
\newblock The dialectica categories.
\newblock Technical report, University of Cambridge, 1991.

\bibitem{depaiva02}
Valeria de~Paiva.
\newblock Lineales: algebras and categories in the semantics of linear logic.
\newblock In D.~Barker-Plummer, D.~Beaver, Johan van Benthem, and P.~Scotto
  di~Luzio, editors, {\em Words, Proofs and Diagrams}. CSLI, 2002.

\bibitem{depaiva07}
Valeria de~Paiva.
\newblock {D}ialectica and {C}hu construtions: Cousins?
\newblock {\em Theory and applications of categories}, 17(7):127--152, 2007.

\bibitem{hofstra11}
Pieter Hofstra.
\newblock The dialectica monad and its cousins.
\newblock In {\em Models, Logics, and Higher-dimensional Categories: A Tribute
  to the Work of Mihaly Makkai}, volume~53 of {\em CRM Proceedings and Lecture
  Notes}, pages 107--139. American Mathematical Society, 2011.

\bibitem{hyland02}
Martin Hyland.
\newblock Proof theory in the abstract.
\newblock {\em Annals of pure and applied logic}, 114(1-3):43--78, 2002.

\bibitem{hyland07}
Martin Hyland.
\newblock Slides of an invited lecture `{F}ibrations in {L}ogic' at {C}ategory
  {T}heory 2007, {C}oimbra, {P}ortrugal.
\newblock Available electronically at
  https://www.dpmms.cam.ac.uk/~martin/Research/Slides/ct2007.pdf, 2007.

\bibitem{hyland93}
Martin Hyland and Luke Ong.
\newblock Fair games and full completeness for multiplicative linear logic
  without the {MIX} rule.
\newblock Unpublished manuscript, 1993.

\bibitem{hyland01}
Martin Hyland and Andrea Schalk.
\newblock Glueing and orthogonality for models of linear logic.
\newblock {\em Theoretical computer science}, 294(1-2):183--231, 2003.

\bibitem{kohlenbach08}
Ulrich Kohlenbach.
\newblock {\em Applied proof theory: proof interpretations and their use in
  mathematics}.
\newblock Springer, 2008.

\bibitem{mellies04}
Paul-Andr\'e Melli\`es.
\newblock Asynchronous games 3: An innocent model of linear logic.
\newblock {\em Proceedings of the 10th Conference on Category Theory and
  Computer Science}, 2004.

\bibitem{mellies05}
Paul-Andr\'e Melli\`es.
\newblock Asynchronous games 4: A fully complete model of propositional linear
  logic.
\newblock {\em Proceedings of the 20th Conference on Logic in Computer
  Science}, 2005.

\bibitem{mellies09}
Paul-Andr\'e Melli\`es.
\newblock Categorical semantics of linear logic.
\newblock In {\em Interactive models of computation and program behaviour}.
  Soci\'et\'e Math\'ematique de France, 2009.

\bibitem{oliva07b}
Paulo Oliva.
\newblock Computational interpretations of classical linear logic.
\newblock {\em Proceedings of WoLLIC'07}, 4576:285--296, 2007.

\bibitem{oliva08}
Paulo Oliva.
\newblock An analysis of {G}\"odel's dialectica interpretation via linear
  logic.
\newblock {\em Dialectica}, 62:269--290, 2008.

\bibitem{oliva10}
Paulo Oliva.
\newblock Functional interpretations of linear and intuitionistic logic.
\newblock {\em Information and Computation}, 208(5):565--577, 2010.

\bibitem{scott78}
Philip~J. Scott.
\newblock The ``{D}ialectica interpretation and categories.
\newblock {\em Mathematical logic quarterly}, 24(31-36):553--575, 1978.

\bibitem{shirahata06}
Masaru Shirahata.
\newblock The dialectica interpretation of first-order classical affine logic.
\newblock {\em Theory and applications of categories}, 2006.

\end{thebibliography}

\end{document}